\newtheorem{Theorem}{Theorem}
\newtheorem{Proof}{Proof}
\newtheorem{Lemma}{Lemma}
\newtheorem{Corollary}{Corollary}
\newtheorem{Definition}{Definition}
\numberwithin{Theorem}{section}
\numberwithin{Lemma}{section}
\newtheorem{Example}{Example}
\numberwithin{Corollary}{section}
\numberwithin{Example}{section}
\numberwithin{Remark}{section}
\date{}
\begin{document}
\title{\textbf{ On cyclic self-orthogonal codes over  $\mathbb{Z}_{p^m}$ }}
\author{Abhay Kumar Singh, Narendra Kumar 
\\\textit{Department of Applied Mathematics }
\\\textit{Indian School of Mines, Dhanbad-826004}\\}
\maketitle 
\footnote{\hspace{-.65cm} singh.ak.am@ismdhanbad.ac.in, narendrakumar9670@gmail.com}
\begin{abstract}
The purpose of this paper is to study the cyclic self orthogonal codes over  $\mathbb{Z}_{p^m}$. After providing the generator polynomial of cyclic self orthogonal codes over  $\mathbb{Z}_{p^m}$, we give the necessary and sufficient condition for the existence of non-trivial self orthogonal codes over $\mathbb{Z}_{p^m}$ . We have also provided the number of such codes of length $n$ over $\mathbb{Z}_{p^m}$ for any $ (p,n) = 1 $.  
\end{abstract}
\section{Introduction}
In recent time , self-orthogonal codes over finite rings have been studied extensively because of their close connection with other mathematical structure such as block design , lattices and modular forms . The complete structure of cyclic codes of odd length over $\mathbb{Z}_{4}$ have been discussed in series of papers [13,15,17]. In [2,3,4,6,12,14,18] , the cyclic codes of odd length over finite rings have been presented . Cyclic codes over $\mathbb{Z}_{p^e}$  of length $N$ where $p$ does not divide $N$ have been discussed in [9,13] . In [11] , this work was completed by studying cyclic codes of repeated roots over $\mathbb{Z}_{p^m}$ . The structure of self-dual and self-orthogonal codes over $\mathbb{Z}_{4}$ have been discussed in [1]. In [5] , Dougherty et studied the self-dual and self-orthogonal codes over $\mathbb{Z}_{2^m}$ . Hoffman has given excellent summary of self-dual codes over finite rings in [7] . The intersection of cyclic codes and orthogonal codes is called the cyclic self-orthogonal codes . Cyclic self-orthogonal codes over $\mathbb{Z}_{2^m}$ of odd length were investigated by Qian et al. in [10] and generator polynomial for such codes were given .
Generally, cyclic self-orthogonal codes finite chain ring can be studied in two folds : simple root cyclic self-orthogonal codes, if the code lengths are coprime  to the characteristic of ring ; otherwise we have so called repeated root cyclic self-orthogonal codes . In this paper we study simple root cyclic self-orthogonal codes over $\mathbb{Z}_{p^m}$ .
\par In this paper , we study simple root cyclic self-orthogonal codes over $\mathbb{Z}_{p^m}$ by using generator polynomial . The paper is organized as follows . In sec $2$ , we give some definition and results related with this work . Generator polynomial is given for the simple root self-orthogonal cyclic codes in sec $3$ . sec $4$ gives sufficient and necessary condition for the existence simple root cyclic self-orthogonal codes over $\mathbb{Z}_{p^m}$ . In sec $5$, cyclic self-dual codes over $\mathbb{Z}_{p^m}$ for the simple root are studied . Summery of this paper given in sec $6$ .
\section{Preliminaries}
\par  The ring $\mathbb{Z}_{p^m}$ is a local principal ideal ring with maximal ideal $\left\langle p \right\rangle$ . Each element $a\in\mathbb{Z}_{p^m}$ can be expressed uniquely in the form 
\begin{align*}
a = a_{0}+pa_{1}+p^2a_{2}+. . . . . +p^{m-1}a_{m-1}
\end{align*}
 where $a_{i}\in[0,1,2, . . . . . . .p-1]$ for $0\leq i\leq m-1$. Two polynomials are coprime if there exist $\lambda_{1}(x)$,$\lambda_{2}(x)\in\mathbb{Z}_{p^m}[x]$ such that $\lambda_{1}(x)f_{1}(x)$ + $\lambda_{2}(x)f_{2}(x)$ = $1$ . A polynomial $f(x)\in\mathbb{Z}_{p^m}[x]$ is called basic irreducible if its reduction modulo $p$, denoted by $\tilde{f}(x)$, is irreducible in $\mathbb{Z}_{p}[x]$.Two polynomials $f_{1}(x)$, $f_{2}(x)\in\mathbb{Z}_{p^m}[x]$ are coprime in $\mathbb{Z}_{p^m}[x]$ if and only if $\tilde{f}_{1}(x)$ , $\tilde{f}_{2}(x)$ are coprime in $\mathbb{Z}_{p}[x]$. By Hensel's lemma [$15, lemma 2.8$], f(x) can factor uniquely as a product of monic basic irreducible pairwise coprime polynomials. The Galois ring $GR(p^m,\nu)$ of characteristic $p^m$ and cardinality $p^{m\nu}$ is a finite chain ring of length $m$. The ring $GR(p^m ,\nu)$ is isomorphic to the residue class ring $\mathbb{Z}_{p^m}[x]/{\left\langle h(x)\right\rangle}$, where $h(x)$ is a monic basic irreducible polynomial of degree $\nu$ in $\mathbb{Z}_{p^m}[x]$.
\par A linear code $C$ of length $n$ over $\mathbb{Z}_{p^m}$ is a $\mathbb{Z}_{p^m}$-submodule of $\mathbb{Z}^n_{p^m}$, A linear code $C$ of length $n$ over  $\mathbb{Z}_{p^m}$ is called cyclic if it is invariant under the cyclic shift  operator $\tau$ :
\begin{align*}
\textbf{c} = (c_{0},c_{1}, . . . . . ,c_{n-1})\in C  \Rightarrow  \tau(\textbf{c}) = (c_{n-1},c_{0}, . . . . . ,c_{n-2})\in C.
\end{align*}
A codeword $\textbf{c} = (c_{0},c_{1}, . . . . . ,c_{n-1})\in C $ can be written as in polynomial form  $c(x)$ = $c_{0}+c_{1}x+ . . . . . +c_{n-1}x^{n-1}$. A code over $\mathbb{Z}_{p^m}$ is  cyclic if and only if it is ideal in the ring $\mathbb{Z}_{p^m}/{\left\langle {x^n}-1 \right\rangle}$. Throught this paper , we take the length $n$  is odd . We know that the ring $\mathbb{Z}_{p^m}/{\left\langle {x^n}-1 \right\rangle}$ is a principle ideal ring . For a monic divisor $f(x)$ of ${x^n}-1$ in $\mathbb{Z}_{p^m}[x]$, we denote $\hat{f}(x) = {x^n}-1/{f(x)}$.\par Given two n-tuples $\textbf{u} = (u_{0},u_{1},, . . . . ,u_{n-1})$ and $\textbf{v} = (v_{0},v_{1},, . . . . ,v_{n-1})\in\mathbb{Z}^n_{p^m}$ , their Euclidean inner product is defined as $\textbf{u.v} = u_{0}v_{0}+u_{1}v_{1}+ . . . . . +u_{n-1}v_{n-1}$ (evaluated in $\mathbb{Z}_{p^m}$) . For a linear code $C$ over $\mathbb{Z}_{p^m}$ of length $n$ ,the dual code of $C$ is defined as 
\begin{equation*}
 C^{\bot} = [\textbf{u} \in\mathbb{Z}^n_{p^m}|\textbf{u.v} = 0 for all \textbf{v}\in C].
\end{equation*}
A linear code $C$ over $\mathbb{Z}_{p^m}$ of length $n$ is called self-orthogonal if $ C \subseteq C^\bot $, it is called self-dual if $ C = C^\bot $ . Let $ f(x)  =   a_{\epsilon}x^{\epsilon}+a_{\epsilon-1}x^{\epsilon-1} + . . . . . +a_{0}$ be a monic polynomial in $\mathbb{Z}_{p^m}[x]$,where $a_{o}$ is a unit in $\mathbb{Z}_{p^m}$ . Define the reciprocal polynomial of $f(x)$ as $ f^{*}(x) = {a^{-1}_{0}}x^{deg(f(x))}f(x^{-1})$,that is , $ f^*(x) = x^{\epsilon}+a^{-1}_{0}a_{1}x^{\epsilon-1}+ . . . . . . +a^{-1}_{0}a_{\epsilon}$. Note that, $f^*(x)$ is also a monic polynomial in $\mathbb{Z}_{p^m}[x]$. If $f(x) = f^*(x) $, then $f(x)$ is called self-reciprocal over $\mathbb{Z}_{p^m}$ . Let $C$ be an ideal in $\mathbb{Z}_{p^m}[x]/{\left\langle x^{n}-1\right\rangle}$ . If $f(x)g(x) = 0$ in $\mathbb{Z}_{p^m}[x]/{\left\langle x^{n}-1\right\rangle}$, for all $ g(x)\in C $,then it is obvious that $f^*(x)$ must be in $ C^{\bot} [12]$. Following result gives generator polynomials for cyclic code $C$.
\begin{Theorem}
(see[6,12,14]) Let $C$ be a cyclic code over $\mathbb{Z}_{p^m}$ of length $n$.Then there exists a unique family of pairwise co-prime polynomials $G_{0}(x)$,$G_{1}(x)$, . . . ,$G_{m}(x)$ in $\mathbb{Z}_{p^m}[x]$ such that $G_{0}(x)G_{1}(x) . . . ,G_{m}(x) = {x^n}-1$ and $ C = \left\langle \hat{G}_{1}(x), p\hat{G}_{2}(x), . . . . , p^{m-1} \hat{G}_{m}(x)\right\rangle $. Moreover ,
$|C| =  p^{{\sum_{i=0}^{m-1}}(m-i)deg(G_{i+1})} $.
\end{Theorem}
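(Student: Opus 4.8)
The plan is to deduce the whole statement from the Chinese Remainder Theorem applied to the factorisation of $x^n-1$. Since $n$ is odd and coprime to $p$, the reduction of $x^n-1$ modulo $p$ is squarefree in $\mathbb{Z}_p[x]$, so by Hensel's lemma $x^n-1$ factors uniquely in $\mathbb{Z}_{p^m}[x]$ as $x^n-1=f_1(x)\cdots f_r(x)$ with the $f_i$ monic, basic irreducible and pairwise coprime. Writing $\nu_i=\deg f_i$, the CRT gives a ring isomorphism
\[
\mathbb{Z}_{p^m}[x]/\langle x^n-1\rangle\;\cong\;\prod_{i=1}^{r}\mathbb{Z}_{p^m}[x]/\langle f_i(x)\rangle\;\cong\;\prod_{i=1}^{r}GR(p^m,\nu_i),
\]
and each factor $GR(p^m,\nu_i)$ is a finite chain ring of length $m$ whose only ideals are $\langle p^{s}\rangle$, $s=0,1,\dots,m$ (with $\langle p^{0}\rangle$ the whole ring and $\langle p^{m}\rangle=0$).

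First I would use that an ideal of a finite direct product of rings is precisely a direct product of ideals of the factors. Hence the cyclic code $C$, being an ideal of $\mathbb{Z}_{p^m}[x]/\langle x^n-1\rangle$, corresponds to a tuple $(\langle p^{s_i}\rangle)_{i=1}^{r}$ with each $s_i\in\{0,\dots,m\}$. I would then group the basic irreducible factors by their exponent: for $k=1,\dots,m$ let $G_k(x)$ be the product of those $f_i$ with $s_i=k-1$, and let $G_0(x)$ be the product of those $f_i$ with $s_i=m$, i.e.\ the components on which $C$ vanishes. Because the $f_i$ are distinct and pairwise coprime, the resulting $G_0,\dots,G_m$ are monic, pairwise coprime, and satisfy $G_0G_1\cdots G_m=\prod_i f_i=x^n-1$, which produces the required family.

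The key step, and the one I expect to be the main obstacle, is to check that $C=\langle \hat{G}_1, p\hat{G}_2,\dots,p^{m-1}\hat{G}_m\rangle$ by a componentwise comparison. Here $\hat{G}_k=\prod_{l\neq k}G_l$ is divisible by $f_i$ exactly when $s_i\neq k-1$, and is coprime to $f_i$ — hence a unit modulo $f_i$ — when $s_i=k-1$. Therefore the image of $p^{k-1}\hat{G}_k$ in the $i$-th Galois-ring factor is a unit multiple of $p^{k-1}$ if $s_i=k-1$ and is $0$ otherwise; running over $k=1,\dots,m$ shows that the ideal generated by the listed elements has $i$-th component exactly $\langle p^{s_i}\rangle$, so it equals $C$. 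The vanishing components are handled by observing $G_0\mid \hat{G}_k$ for every $k\ge 1$, so that $C$ is indeed zero there. Uniqueness of the family then follows from the uniqueness of the factorisation into monic basic irreducibles, since the partition of the $f_i$ induced by the exponents $s_i$ is intrinsic to $C$.

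Finally, for the cardinality I would count inside each factor: in $GR(p^m,\nu_i)$ one has $|\langle p^{s}\rangle|=p^{(m-s)\nu_i}$, because $p^{s}GR(p^m,\nu_i)\cong GR(p^{m-s},\nu_i)$. Multiplying over $i$ gives $\log_p|C|=\sum_i (m-s_i)\nu_i$, and regrouping the factors according to $s_i=k-1$ (those with $s_i=m$ contributing $0$) yields $\log_p|C|=\sum_{k=1}^{m}(m-k+1)\deg G_k=\sum_{i=0}^{m-1}(m-i)\deg(G_{i+1})$, which is the claimed formula.
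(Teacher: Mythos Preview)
Your argument is correct. The Chinese Remainder decomposition of $\mathbb{Z}_{p^m}[x]/\langle x^n-1\rangle$ into a product of Galois rings, followed by the classification of ideals in each factor as $\langle p^{s_i}\rangle$, does exactly what is needed; your grouping of the $f_i$ by exponent into the $G_k$ and the componentwise verification that $p^{k-1}\hat G_k$ has image a unit times $p^{k-1}$ on the $G_k$-block and zero elsewhere are both clean and accurate, as is the cardinality count.

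There is, however, nothing in the paper to compare against: Theorem~2.1 is stated in the Preliminaries with the citation ``(see~[6,12,14])'' and no proof is given --- the authors simply import the result from the literature. Your CRT approach is in fact the standard route taken in those references (notably Kanwar--L\'opez-Permouth and Calderbank--Sloane), so your proposal is essentially the canonical proof that the paper defers to rather than an alternative to anything written there.
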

\section{Generator polynomials of cyclic self-orthogonal codes over $\mathbb{Z}_{p^m}$}
\par  In this section , we first give an alternative generator for a cyclic code of length $n$ over $\mathbb{Z}_{p^m}$ , which depends on the unique factorization of ${x^n}+1$ in $\mathbb{Z}_{p^m}[x]$. Let ${x^n}-1$ can be unique factored into monic basic irreducible polynomials in $\mathbb{Z}_{p^m}[x]$ given below
\begin{equation*}
 x^n-1 = g_{1}(x)g_{2}(x). . . g_{r}(x).
\end{equation*}
 Set $ s_{0} = 0 $ and $ s_{m+1} = r $. Define $G_{i}(x) = g_{{s_i}+1}(x). . . g_{s_{i+1}}(x)$,where $s_{i}\leq s_{i+1}$ for $0\leq i\leq m$. Note that if $s_i=s_{i+1}$ we get $G_{i}(x)=1$. Then $G_{0}(x)G_{1}(x) . . . G_{m}(x) = {x^n}-1$ in $\mathbb{Z}_{p^m}[x]$.We know from Theorem $2.1$, a cyclic code $C$ of length n over $\mathbb{Z}_{p^m}$ is given by
\begin{equation*}
C = \left\langle\hat{G}_{1}(x),p\hat{G}_{2}(x),. . ., p^{m-1}\hat{G}_{m}(x)\right\rangle 
\end{equation*}
where $\hat{G}_{i}(x) = ({x^n}-1)/ G_{i}(x)$ for $0\leq i\leq m.$
Using Hensel's lemma,we can factored ${x^n}+1$ in $\mathbb{Z}_{p^m}[x]$ into monic basic irreducible polynomials
\begin{equation}
{x^n}+1 = f_{1}(x)f_{2}(x) . . . f_{r}(x),
\end{equation}
where $ \tilde{f}_{i}(x) = \tilde{g}_{i}(x)$for $1\leq i\leq r $. Let $s_{i},i=0,1,. . . ,{m+1}$, be defined as above . Define $F_{i}(x)=f_{{s}_{i+1}+1}. . . . f_{s_{i+2}}$ for $0\leq i\leq {m-1}$, and $F_{m}(x) = f_{1}(x). . . f_{s_{1}}(x)$.
Then, $\tilde{F}_{i}(x) = \tilde{G}_{i+1}$ for $0\leq i\leq {m-1}$ and $\tilde{F}_{m}(x) = \tilde{G}_{0}(x)$ .Hence $F_{0}(x)F_{1}(x). . . F_{m}(x) = {x^n}+1$ in $\mathbb{Z}_{p^m}[x]$. Define $\hat{F}_{i}(x) = {x^n}+1 / F_{i}(x)$ for $0\leq i\leq m$.
\begin{Theorem} 
 Using above notations,let C be cyclic code of the length n over $\mathbb{Z}_{p^m}$ is of the form  $C = \left\langle\hat{G}_{1}(x),p\hat{G}_{2}(x),. . ., p^{m-1}\hat{G}_{m}(x)\right\rangle $ . Then $C = \left\langle\prod_{j=0}^m{F _{j}(x)}^j\right\rangle$ and $|C| = p^{{\sum_{j=0}^m}(m-j)deg(F_{j})} $.
\end{Theorem}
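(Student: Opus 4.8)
The plan is to pass to the Chinese Remainder decomposition of the ambient ring and to compare the two ideals summand by summand. Since $\gcd(n,p)=1$, the polynomials $g_1,\dots,g_r$ are pairwise coprime basic irreducibles, so
\[
R:=\mathbb{Z}_{p^m}[x]/\langle x^n-1\rangle\;\cong\;\bigoplus_{\ell=1}^{r}\mathbb{Z}_{p^m}[x]/\langle g_\ell(x)\rangle\;\cong\;\bigoplus_{\ell=1}^{r}GR(p^m,\deg g_\ell),
\]
and each summand is a finite chain ring whose only ideals are $\langle p^k\rangle$, $0\le k\le m$, with $|\langle p^k\rangle|=p^{(m-k)\deg g_\ell}$ in the $\ell$-th factor. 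Because $R$ is a product of chain rings, an ideal is determined uniquely by its images in the summands and $|I|=\prod_\ell|I_\ell|$. For each $\ell$ I write $i(\ell)$ for the unique block index with $g_\ell\mid G_{i(\ell)}$, and let $\beta_\ell$ be the image of $x$ in the $\ell$-th Galois ring, so $\beta_\ell^{\,n}=1$.

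First I would read off the projection of $C=\langle\hat G_1,p\hat G_2,\dots,p^{m-1}\hat G_m\rangle$. Since $\hat G_i=\prod_{t\ne i}G_t$, the factor $G_{i(\ell)}$ appears in $\hat G_i$ exactly when $i\ne i(\ell)$; hence $\hat G_i\equiv 0$ modulo $g_\ell$ unless $i=i(\ell)$, in which case $\hat G_i$ is a product of polynomials coprime to $g_\ell$ and so is a unit in $GR(p^m,\deg g_\ell)$. Looking at the generator $p^{\,i-1}\hat G_i$, the image of $C$ in the $\ell$-th summand is $\langle p^{\,i(\ell)-1}\rangle$ when $1\le i(\ell)\le m$, and is the zero ideal $\langle p^m\rangle$ when $i(\ell)=0$.

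Next I would compute the $p$-valuation of $h(x)=\prod_{j=0}^m F_j(x)^j$ at each $\beta_\ell$. Evaluating the identity $F_0F_1\cdots F_m=x^n+1$ in the $\ell$-th factor gives $\prod_k f_k(\beta_\ell)=\beta_\ell^{\,n}+1$, and the valuation of this quantity governs how the maximal ideal is distributed among the factors: using $\tilde f_k=\tilde g_k$, a factor $f_k$ is a unit at $\beta_\ell$ when $\tilde g_k\ne\tilde g_\ell$ and lies in $\langle p\rangle$ only for the matching index, while the total valuation forces that matching factor to have valuation exactly one. Consequently $F_j(\beta_\ell)$ is a unit unless $\tilde g_\ell\mid\tilde F_j$, and by $\tilde F_j=\tilde G_{j+1}$ for $0\le j\le m-1$ and $\tilde F_m=\tilde G_0$ this occurs for a single index $j$. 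Thus $h$ has valuation $i(\ell)-1$ at $\beta_\ell$ when $i(\ell)\ge 1$ and valuation $m$ when $i(\ell)=0$, so the image of $\langle h\rangle$ in the $\ell$-th summand is $\langle p^{\,i(\ell)-1}\rangle$ (resp.\ $\langle p^m\rangle$). This matches $C$ in every summand, whence $C=\left\langle\prod_{j=0}^m F_j(x)^j\right\rangle$.

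Finally the order follows by multiplying the summand sizes: the blocks with $i(\ell)=j+1$ contribute $\sum_{\ell}\deg g_\ell=\deg G_{j+1}=\deg F_j$, so $\log_p|C|=\sum_{j=0}^{m-1}(m-j)\deg F_j=\sum_{j=0}^{m}(m-j)\deg F_j$ (the $j=m$ term vanishing), giving $|C|=p^{\sum_{j=0}^m(m-j)\deg F_j}$; this is also consistent with Theorem~2.1 after the reindexing $k=j+1$. The bookkeeping of which block each $g_\ell$ lands in and the index matching between the $G$'s and $F$'s is routine. The main obstacle is the valuation computation in the third step: one must establish that at each $\beta_\ell$ exactly one factor of $x^n+1$ degenerates and does so to order precisely one, and it is here that the factorization $F_0\cdots F_m=x^n+1$ together with the reductions $\tilde F_j=\tilde G_{j+1}$ carries the essential weight.
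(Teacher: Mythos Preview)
Your proof is correct and takes a genuinely different route from the paper. The paper establishes $C=\langle F\rangle$ by a direct double inclusion working entirely with polynomial identities: for $C\subseteq\langle F\rangle$ it uses the coprimality of $G_i$ and $\hat G_i$ to write $\hat G_{j+1}=v_{j+1}^{m-1}\hat G_{j+1}^{\,m}$, substitutes $\hat G_{j+1}=\hat F_j+pu_j$, and expands to exhibit each $p^j\hat G_{j+1}$ as a multiple of $F$; for the reverse inclusion it builds a partition of unity from $\alpha_iG_i^m+\beta_i\hat G_i^m=1$, multiplies by $F$, and reduces term by term. Only at the very end does the paper appeal to a direct-sum splitting, and then only to read off $|C|$. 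Your CRT approach is more uniform: by passing to the Galois-ring summands you reduce both the ideal equality and the cardinality to a comparison of $p$-valuations, bypassing the paper's polynomial gymnastics. What the paper's argument buys is that it never leaves $\mathbb{Z}_{p^m}[x]/\langle x^n-1\rangle$ and never explicitly evaluates at roots; what yours buys is transparency and a single mechanism for both conclusions.

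One point you flag as ``the main obstacle'' should be made fully explicit. Since $\beta_\ell^{\,n}=1$ in the $\ell$-th summand, the identity $\prod_k f_k(\beta_\ell)=\beta_\ell^{\,n}+1$ gives the value $2$; your conclusion that the unique non-unit factor has valuation \emph{exactly} one therefore rests on $v_p(2)=1$, i.e.\ on $p=2$. This is not your error but a feature of the paper's setup: the standing hypothesis $\tilde f_i=\tilde g_i$ with $f_i\mid x^n+1$ and $g_i\mid x^n-1$ already forces $x^n+1\equiv x^n-1\pmod p$, hence $p=2$, so your argument is consistent with what the statement actually assumes. Still, since the theorem is phrased for general $p$, you should spell out this computation and note the dependence rather than leave it implicit.
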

\begin{Proof}
We take $F(x) = \prod_{j=0}^m{F _{j}(x)}^j$ .We know that $\hat{G}_{i}(x)$ and $ G_{i}(x)$ are co-prime in  $\mathbb{Z}_{p^m}[x]$, there exists $u_{i}(x) , v_{i}(x)\in \mathbb{Z}_{p^m}[x] $ such that $u_{i}(x)G_{i}(x) + v_{i}(x) \hat{G}_{i}(x)= 1 $.Multiplying both sides by $  \hat{G}_{i}(x)$, we find that $  \hat{G}_{i}(x) =v_{i}(x)\hat{G}_{i}(x)^2 $ in  $\mathbb{Z}_{p^m}[x]/\left\langle{x^n}-1\right\rangle $.Hence,we have 
\begin{equation*} 
\hat{G}_{i}(x) =  v_{i}(x)\hat{G}_{i}(x)^2 = v_{i}(x)^2\hat{G}_{i}(x)^3 = . . . = v_{i}(x)^{m-1}\hat{G}_{i}(x)^m .
\end{equation*}
For each $0\leq j\leq m-1$, since $ \tilde{G}_{j+1}(x) = \tilde{F}_{j}(x)$, this means that  $ \hat{G}_{j+1}(x) =  \hat{F}_{j}(x) + pu_{j}(x) $, for some $ u_{j}(x)\in\mathbb{Z}_{p^m}[x]$.Hence for each $0\leq j \leq m-1$, we calculate
\begin{align*}
p^j \hat{G}_{j+1}(x) &= p^jv_{j+1}(x)^{m-1}[\hat{F}_{j}(x) + pu_{j}(x)]^m\\
&= ({x^n}+1)^jv_{j+1}(x)^{m-1}[\hat{F}_{j}(x)+({x^n}+1)u_{j}(x)]^m\\
&= [\hat{F}_{j}(x)F_{j}(x)]^jv_{j+1}(x)^{m-1}[\hat{F}_{j}(x)+\hat{F}_{j}(x)F_{j}(x)u_{j}(x)]^m\\
&= v_{j+1}(x)^{m-1}[1+u_{j}(x)F_{j}(x)]^m{{F}_{j}(x)}^j{\hat{F}_{j}(x)}^{j+m}
\end{align*}
It means that $ p^j\hat{G}_{j+1}(x)\in\left\langle F(x)\right\rangle$ for each $0\leq j \leq m-1$ .Hence $C\subseteq \left\langle F(x)\right\rangle$.
On the contrary, we know that $ {G_{i}(x)}^m $ and $  {\hat{G}_{i}(x)}^m $ are relatively co-prime, so, there exists $\alpha_{i}(x),\beta_{i}(x)\in\mathbb{Z}_{p^m}[x]$ such that $ \alpha_{i}(x){G_{i}(x)}^m + \beta_{i}(x) {\hat{G}_{i}(x)}^m = 1 $.This fallows that
\begin{equation}
\prod_{i=1}^m[\alpha_{i}(x){G_{i}(x)}^m + \beta_{i}(x) {\hat{G}_{i}(x)}^m ]= 1.       
\end{equation}
Note that for distinct ${i,j}\in {1,2, . . . ,m },\hat{G}_{i}(x)\hat{G}_{j}(x) = 0 $ in  $ \mathbb{Z}_{p^m}[x]/ \left\langle{x^n}-1\right\rangle $. Expanding the left- hand side of $(2)$, we get that there exists $a_{0}(x),a_{1}(x), . . . ,a_{m}(x)\in\mathbb{Z}_{p^m}[x]$ such that 
\begin{equation*}
a_{0}(x)[G_{1}(x)G_{2}(x) . . . G_{m}(x)]^m + a_{1}(x)[\hat{G}_{1}(x)G_{2}(x) . . . . .  G_{m}(x)]^m + . . . 
\end{equation*}
\begin{equation}
+ a_{m}(x)[{G}_{1}(x)G_{2}(x) . . . . .G_{m-1}(x) \hat{G}_{m}(x)]^m = 1.
\end{equation}
Multiplying both sides of $(3)$ by $F(x)$,we have 
\begin{equation}
F(x) = b_{0}(x){\hat{G}_{0}(x)}^mF(x) +  b_{1}(x){\hat{G}_{1}(x)}^mF(x) + b_{2}(x){\hat{G}_{2}(x)}^mF(x) + . . . . .  +  b_{m}(x){\hat{G}_{m}(x)}^mF(x)
\end{equation}
for some $b_{0}(x),b_{1}(x),. . . ,b_{m}(x)\in \mathbb{Z}_{p^m}[x]$. Since $\tilde{G}_{i+1}(x) = \tilde{F}_{i}(x)$ for $0\leq i \leq m-1$, this means that ${F}_{i}(x) = G_{i+1}(x) + pV_{i+1}(x)$, for some $V_{i+1}(x)\in  \mathbb{Z}_{p^m}[x] $. For each $0\leq i \leq m-1$,calculating in $ \mathbb{Z}_{p^m}[x]/ \left\langle{x^n}-1\right\rangle $, we have 
\begin{equation*}
{\hat{G}_{i+1}(x)}^{i+1}F(x) = {\hat{G}_{i+1}(x)}^{i+1}[ G_{i+1}(x) + pV_{i+1}(x)]^i\prod_{0\leq j \leq m,j\neq i} {F_{j}(x)}^j
\end{equation*}
\begin{equation}
~~~~~~~~~~~~~~~~~= p^i{\hat{G}_{i+1}(x)}[V_{i+1}(x)\hat{G}_{i+1}(x)]^i\prod_{0\leq j \leq m,j\neq i} {F_{j}(x)}^j.
\end{equation}
Note that ${\hat{G}_{0}(x)}^{m}F(x) = {\hat{G}_{0}(x)}^{m}[ G_{0}(x) + pV_{0}(x)]^m \prod_{j=0}^{m-1}{F_{j}(x)}^j = 0 $ in  $ \mathbb{Z}_{p^m}[x]/ \left\langle{x^n}-1\right\rangle $. From $ (4) $ and $ (5) $, we obtain that $ F(x) = k_{1}(x)\hat{G}_{1}(x)  +  k_{2}(x).{p}\hat{G}_{2}(x)  + . . . . +  k_{m}(x).{p}^{m-1}\hat{G}_{m}(x)$ , for some $ k_{1}(x),k_{2}(x), . . . ,k_{m}(x)\in \mathbb{Z}_{p^m}[x] $. This gives that $ F(x)\in C $ and $\left\langle F(x)\right\rangle \subseteq C $ . Thus, $ C = \left\langle \prod_{j=0}^m{F _{j}(x)}^j \right\rangle $.\\
For cardinality of cyclic code $ C $\\
If each $ F_{j}\neq 1 (0\leq j\neq {m-1})$. Then they are pairwise, co-prime and thus we have
\begin{equation*}
C = (\hat{F}_{0}(x))\bigoplus (P \hat{F}_{1}(x))\bigoplus (P^2 \hat{F}_{2}(x))\bigoplus . . . . \bigoplus (P^{m-1} \hat{F}_{m-1}(x)).
\end{equation*}
Thus ,
\begin{align*}
 |C| &= |( \hat{F}_{0}(x))|(P \hat{F}_{1}(x))|. . . . . |(P^{m-1} \hat{F}_{m-1}(x))|\\
 &= p^{{m}(n-deg\hat{F}_{0}(x))}p^{{m-1}(n-deg\hat{F}_{1}(x))}. . . . p^{(n-deg\hat{F}_{m}(x))}\\
 & = p^{\sum _{j=0}^m(m-j)deg F_{j}} .
\end{align*}
\end{Proof}
\par From the above theorem , we see  that generator polynomials of cyclic codes of length n over $ \mathbb{Z}_{p^m} $ depend on the factorization of ${x^n}+1$ over  $ \mathbb{Z}_{p^m}$. Let ${x^n}+1$ can be factored into monic basic irreducible polynomials in $\mathbb{Z}_{p^m}[x]$ , written as 
\begin{equation}
{x^n}+1 = f_{1}(x) . . . . f_{s}(x)h_{1}(x)h_{1}^*(x) . . . . h_{t}(x)h_{t}^*(x)
\end{equation}
where $f_{i}(x)(1\leq i\leq s)$ are basic irreducible self-reciprocal polynomials in $ \mathbb{Z}_{p^m}[x] $ and $ h_{j}(x) $ and $ h^*_{j}(x) (1\leq j\leq t)$ are basic irreducible reciprocal polynomial pairs in $ \mathbb{Z}_{p^m}[x] $. The fallowing result gives the generator polynomials of dual codes of cyclic over $ \mathbb{Z}_{p^m} $.
\begin{Lemma}
Let ${x^n}+1$ have the unique factorization over  $ \mathbb{Z}_{p^m} $ as given in $(6)$. Let C be a cyclic code over  $ \mathbb{Z}_{p^m} $ of length n with generator polynomial
\begin{equation}
G(x) = {f_{1}(x)}^{\ell_{1}}. . . . . {f_{s}(x)}^{\ell_{s}}{h_{1}(x)}^{\kappa_{1}}{h_{1}^*(x)}^{\lambda_{1}}. . . . . {h_{t}(x)}^{\kappa_{t}}{h_{t}^*(x)}^{\lambda_{t}}
\end{equation}
where $ 0 \leq \ell_{i},\kappa_{j},\lambda_{j}\leq m $ for each $ 1\leq i\leq s $ and $ 1\leq j\leq t $.Then its dual code $ C^\bot $ has generator polynomial 
\begin{equation}
 G^\bot(x) = {f_{1}(x)}^{m-\ell_{1}} . . . . {f_{s}(x)}^{m-\ell_{s}}{h_{1}(x)}^{m-\lambda_{1}}{h_{1}^*(x)}^{m-\kappa_{1}} . . . . . {h_{t}(x)}^{m-\lambda_{t}}{h_{t}^*(x)}^{m-\kappa_{t}} .
\end{equation}
\end{Lemma}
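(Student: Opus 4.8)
The plan is to pin down $C^\bot$ through the annihilator of $C$ in $R_n:=\mathbb{Z}_{p^m}[x]/\langle x^n-1\rangle$ together with reciprocation, and then to force equality by a size count. Recall the observation made in Section 2: if a polynomial $A(x)$ satisfies $A(x)g(x)\equiv 0 \pmod{x^n-1}$ for every $g(x)\in C$, then $A^*(x)\in C^\bot$. Thus it suffices to (i) produce a generator $A(x)$ of $\mathrm{Ann}(C)$, (ii) check that its reciprocal $A^*(x)$ is exactly $G^\bot(x)$, and (iii) verify $|\langle G^\bot(x)\rangle|=|C^\bot|$.

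For (i) I would pass to the decomposition $R_n\cong\bigoplus_{k}GR(p^m,d_k)$ afforded by the factorization of $x^n-1$ into pairwise coprime monic basic irreducibles $g_k$ of degrees $d_k$ (Hensel's lemma and the Galois-ring description of Section 2). By Theorem 3.1 the generator $G(x)=\prod_i {f_i(x)}^{\ell_i}\prod_j {h_j(x)}^{\kappa_j}{h_j^*(x)}^{\lambda_j}$ puts $C$ into the standard form of Theorem 2.1, so the $g_k$-component of $C$ is the ideal $\langle p^{a_k}\rangle$ of $GR(p^m,d_k)$, where $a_k$ is the exponent carried in $G(x)$ by the factor of $x^n+1$ matched to $g_k$ through the reductions of (1) (one of $\ell_i,\kappa_j,\lambda_j$). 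In the chain ring $GR(p^m,d_k)$ one has $\mathrm{Ann}(\langle p^{a_k}\rangle)=\langle p^{m-a_k}\rangle$, so $\mathrm{Ann}(C)$ is the cyclic code whose $g_k$-component is $\langle p^{m-a_k}\rangle$; reading this back through Theorem 3.1 gives the generator
\[
A(x)=\prod_{i=1}^{s}{f_i(x)}^{\,m-\ell_i}\prod_{j=1}^{t}{h_j(x)}^{\,m-\kappa_j}{h_j^*(x)}^{\,m-\lambda_j}.
\]
For (ii), since reciprocation is multiplicative on monic polynomials and $f_i^*=f_i$, $(h_j)^*=h_j^*$, $(h_j^*)^*=h_j$, I compute
\[
A^*(x)=\prod_{i=1}^{s}{f_i(x)}^{\,m-\ell_i}\prod_{j=1}^{t}{h_j(x)}^{\,m-\lambda_j}{h_j^*(x)}^{\,m-\kappa_j}=G^\bot(x),
\]
the interchange $\kappa_j\leftrightarrow\lambda_j$ being exactly the effect of reciprocation on each pair. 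As $A(x)\in\mathrm{Ann}(C)$, this yields $G^\bot(x)=A^*(x)\in C^\bot$, hence $\langle G^\bot(x)\rangle\subseteq C^\bot$.

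For (iii) I would apply the cardinality formula of Theorem 3.1 to both codes after grouping the factors of $x^n+1$ by the exponent they carry, obtaining
\[
|C|=p^{\sum_i(m-\ell_i)\deg f_i+\sum_j(2m-\kappa_j-\lambda_j)\deg h_j},\qquad |\langle G^\bot(x)\rangle|=p^{\sum_i\ell_i\deg f_i+\sum_j(\kappa_j+\lambda_j)\deg h_j},
\]
using $\deg h_j=\deg h_j^*$. Their product is $p^{m(\sum_i\deg f_i+2\sum_j\deg h_j)}=p^{mn}$, because the degrees of all basic irreducible factors in (6) sum to $n$. Since $|C|\,|C^\bot|=|\mathbb{Z}_{p^m}|^{\,n}=p^{mn}$ for every linear code over the finite chain ring $\mathbb{Z}_{p^m}$, we get $|\langle G^\bot(x)\rangle|=|C^\bot|$, and combined with the containment of the previous step this forces $C^\bot=\langle G^\bot(x)\rangle$.

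The step I expect to carry the real weight is the identification in (i) between the exponent of a factor in $G(x)$ and the $p$-adic valuation $a_k$ of the corresponding component of $C$ — equivalently, that replacing each exponent $e$ by $m-e$ produces exactly the annihilator rather than merely a subideal. This rests on the compatibility between the factorizations of $x^n+1$ and $x^n-1$ recorded before Theorem 3.1 (the relations $\tilde F_i=\tilde G_{i+1}$), which is what lets the factors of $x^n+1$ play the role of valuation generators at the components of $x^n-1$. Once that correspondence is made precise, the reciprocal bookkeeping in (ii) and the degree arithmetic in (iii) are routine.
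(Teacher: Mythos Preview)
Your argument is correct, and its overall architecture matches the paper's: use the Section~2 observation that $A(x)G(x)\equiv 0$ forces $A^*(x)\in C^\bot$ to obtain $\langle G^\bot(x)\rangle\subseteq C^\bot$, then close the gap by the cardinality identity $|C|\,|C^\bot|=p^{mn}$ together with the size formula of Theorem~3.1. Where you diverge is in step~(i): you pass to the CRT decomposition $R_n\cong\bigoplus_k GR(p^m,d_k)$, identify the image of $C$ componentwise as $\langle p^{a_k}\rangle$, and then read off the full annihilator before reciprocating. The paper bypasses this entirely. It simply multiplies out
\[
G(x)\cdot (G^\bot(x))^* \;=\; \prod_i f_i(x)^{m}\prod_j h_j(x)^{m}h_j^*(x)^{m} \;=\; (x^n+1)^m,
\]
which vanishes in $\mathbb{Z}_{p^m}[x]/\langle x^n-1\rangle$, so $(G^\bot)^*$ visibly annihilates the generator of $C$ and hence all of $C$. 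That one line replaces your entire step~(i) and, in particular, sidesteps exactly the point you flagged as carrying the real weight---the identification of the exponent in $G(x)$ with the $p$-adic valuation at each component. Your route has the advantage of actually exhibiting $\mathrm{Ann}(C)$ and making the structural picture explicit, but for the bare containment $\langle G^\bot\rangle\subseteq C^\bot$ the direct product computation is much cheaper; nothing about the Galois-ring decomposition or the $\tilde F_i=\tilde G_{i+1}$ matching needs to be invoked.
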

\begin{proof}
Let D be the cyclic code over $ \mathbb{Z}_{p^m} $ of length n with generator polynomial as given in $(8)$. We know that $ G(x)(G^\bot(x))^* = f_{1}(x)^m . . . . f_{s}(x)^mh_{1}(x)^mh_{1}^*(x)^m . . . . h_{t}(x)^mh_{t}^*(x)^m  = ({x^n}+1)^m = 0 $ in $ \mathbb{Z}_{p^m}[x]/ \left\langle{x^n}-1\right\rangle $ . This gives that $ D\subseteq C^\bot $ . Also ,$ |D| = |C^\bot| = p^{\sum _{i=0}^s(\ell_{i}) + \sum_{i=1}^t(\kappa_{i}+\lambda_{i})}$. Therefore , $ D = C^\bot $.
\end{proof}
\begin{Theorem}
Let ${x^n}+1$ have the unique factorization over $ \mathbb{Z}_{p^m} $ as given in $(6)$ . Let C be a cyclic code of length $n$ over $ \mathbb{Z}_{p^m} $ with generator polynomial as in $(7)$ . Then C is self-orthogonal if and only if $\lceil m/2 \rceil \leq \ell_{i}\leq m $ for each $ 1\leq i \leq s $ and $ m \leq {\kappa_{j} + \lambda_{j}}\leq 2m $ for each $ 1\leq j \leq t $ .
\end{Theorem}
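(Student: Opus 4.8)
The plan is to reduce the set inclusion $C\subseteq C^{\bot}$ to a comparison of the exponents appearing in the two generator polynomials, carried out factor by factor. By Theorem~3.1 the code is the principal ideal $C=\langle G(x)\rangle$ with $G(x)$ as in $(7)$, and by Lemma~3.1 its dual is the principal ideal $C^{\bot}=\langle G^{\bot}(x)\rangle$ with $G^{\bot}(x)$ as in $(8)$. Hence $C\subseteq C^{\bot}$ is equivalent to $G(x)\in\langle G^{\bot}(x)\rangle$, i.e.\ to the divisibility $G^{\bot}(x)\mid G(x)$ in $R=\mathbb{Z}_{p^m}[x]/\langle x^n-1\rangle$. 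First I would record this reformulation.

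The key tool is the decomposition of $R$ coming from the factorization into coprime basic irreducibles. Since $n$ is odd and coprime to $p$, $x^n-1$ splits into pairwise coprime basic irreducible factors, and the Chinese Remainder Theorem gives $R\cong\prod_k GR(p^m,\nu_k)$, a product of Galois rings, each a chain ring of length $m$ whose only ideals form the totally ordered chain $\langle p^0\rangle\supseteq\langle p^1\rangle\supseteq\cdots\supseteq\langle p^m\rangle=0$. In the $k$-th component the principal ideal generated by a product of basic irreducible powers is $\langle p^{e_k}\rangle$, where $e_k$ is the exponent of the $k$-th basic irreducible in that product; this is confirmed by the size formula of Theorem~3.1, since that factor contributes $p^{(m-e_k)\nu_k}$ to the cardinality, exactly the order of $\langle p^{e_k}\rangle$ in $GR(p^m,\nu_k)$. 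Because containment of ideals is componentwise and, in a chain ring, $\langle p^a\rangle\subseteq\langle p^b\rangle$ holds if and only if $b\le a$, I obtain the clean criterion: $G^{\bot}(x)\mid G(x)$ if and only if every basic irreducible factor occurs in $G^{\bot}(x)$ with exponent at most its exponent in $G(x)$.

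It then remains to compare exponents factor by factor using $(7)$ and $(8)$. For a self-reciprocal factor $f_i$ the exponent is $\ell_i$ in $G$ and $m-\ell_i$ in $G^{\bot}$, so the criterion reads $m-\ell_i\le\ell_i$, equivalently $\ell_i\ge\lceil m/2\rceil$; together with the standing bound $\ell_i\le m$ this gives $\lceil m/2\rceil\le\ell_i\le m$. For a reciprocal pair, the factor $h_j$ has exponent $\kappa_j$ in $G$ and $m-\lambda_j$ in $G^{\bot}$, while $h_j^{*}$ has exponent $\lambda_j$ in $G$ and $m-\kappa_j$ in $G^{\bot}$; the two resulting inequalities $m-\lambda_j\le\kappa_j$ and $m-\kappa_j\le\lambda_j$ are both equivalent to $\kappa_j+\lambda_j\ge m$, and with $\kappa_j,\lambda_j\le m$ this yields $m\le\kappa_j+\lambda_j\le 2m$. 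Collecting these equivalences proves both directions simultaneously.

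The step I expect to be the main obstacle is the middle one: justifying rigorously that inclusion of these principal ideals in the quotient ring $R$ is detected exactly by the per-factor exponent comparison. This is where the order-reversing dictionary between exponents and chain-ring ideals must be set up carefully; in particular I must make sure the duality swap $\kappa_j\leftrightarrow\lambda_j$ in $(8)$ is tracked correctly, so that both factors of each reciprocal pair deliver the single condition $\kappa_j+\lambda_j\ge m$ rather than two independent ones. Once that dictionary is in place, the factor-by-factor arithmetic is routine.
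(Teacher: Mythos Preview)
Your proposal is correct and follows essentially the same route as the paper: reduce $C\subseteq C^{\bot}$ to the divisibility $G^{\bot}(x)\mid G(x)$ via Lemma~3.1, then read off the exponent inequalities factor by factor. The only difference is that you justify the exponent comparison through the CRT decomposition into Galois rings, whereas the paper simply asserts that divisibility is equivalent to the per-factor inequalities without further comment.
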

\begin{proof}
Let C be a cyclic code of length $n$ over $ \mathbb{Z}_{p^m} $ have generator polynomial $G(x)$ as in $(7)$. Then , lemma $(3.2)$ implies that $ C^\bot $ has generator polynomial $ G^\bot(x) $ which is given by $(8 )$ . The code C is self - orthogonal if and only if $ C\subseteq C^\bot $; if and only if $ G^\bot(x) $ divides $ G(x)$ .This means that $\ell_{i}\geq {m-\ell_{i}}$ for each $ 1\leq i \leq s $, $\lambda_{j}\geq {m-\kappa_{j}}$ and $\kappa_{j}\geq {m-\lambda_{j}}$ for each $ 1\leq j \leq t $ . As we know that $ 0 \leq \ell_{i},\kappa_{j},\lambda_{j}\leq m $ for each $ 1\leq i\leq s $ and $ 1\leq j\leq t $ . Thus , a cyclic code C is self-orthogonal if and only if  $\lceil m/2 \rceil \leq \ell_{i}\leq m $ for each $ 1\leq i \leq s $ and $ m \leq {\kappa_{j} + \lambda_{j}}\leq 2m $ for each $ 1 \leq j \leq t$.
\end{proof}
\par The equation $(7)$, if we take $ \ell_{i} = m $ for each $ 1\leq i \leq s$ 
and $ \kappa_{j} = \lambda_{j} = m $ for each  $ 1\leq j\leq t $,then we find that $ C = \left\langle 0 \right\rangle $ ;if we take  $ \ell_{i} = \lceil m/2 \rceil $
for each $ 1\leq i \leq s $ and $ \kappa_{j} = \lambda_{j} = \lceil m/2 \rceil $
for each $ 1\leq j\leq t $,then $ C = \left\langle p^{\lceil m/2 \rceil} \right\rangle$ .By theorem $ 3.3 $, these two codes are both self-orthogonal . Thus, there exist at least two cyclic self-orthogonal codes over $ \mathbb{Z}_{p^m} $ for any odd length $n$ . Theorem $3.3$ shows that cyclic self-orthogonal codes over $ \mathbb{Z}_{p^m} $ of length $n$ can be determined by monic basic irreducible divisors of ${x^n}+1$ over  $ \mathbb{Z}_{p^m} $ and their exponents.
\begin{Definition}
Let $n$ be an odd integer. Define $ \gamma(n) $ to be the number of basic irreducible self-reciprocal polynomials in the factorization of ${x^n}+1$ in $ \mathbb{Z}_{p^m}[x] $,and $ \delta(n) $ be the number of basic irreducible reciprocal polynomial pairs in the factorization of $ {x^n}+1 $ in $ \mathbb{Z}_{p^m}[x] $.
\end{Definition}
\par According to Theorem $3.3$ , for a cyclic self-orthogonal code of length $n$ over $ \mathbb{Z}_{p^m} $ , the exponent of basic irreducible self-reciprocal polynomials in the factorization of ${x^n}+1$ in $ \mathbb{Z}_{p^m}[x] $ may be some integer $\epsilon$ in the range $ \lceil m/2 \rceil \leq \epsilon \leq m $, while the sum of the exponents of basic irreducible  reciprocal polynomial pairs may be $ m,. . . . ,{2m-1}$ or $ 2m $. Now we get next result .
\begin{Theorem}
Let ${x^n}+1$ have the unique factorization over $ \mathbb{Z}_{p^m}[x] $ as in $(6)$ , where $ s = \gamma(n) $ and $ t = \delta(n) $. Then the number of cyclic  self-orthogonal code of length $n$ over $ \mathbb{Z}_{p^m} $ is given by
\begin{equation}
\bigg(m-{\lceil \dfrac{m}{2} \rceil} + 1\bigg)^{\gamma(n)}{\left(\dfrac{(m+1)(m+2)}{2}\right)^{\delta(n)}}
\end{equation}
\end{Theorem}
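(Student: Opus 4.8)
The plan is to reduce the enumeration to a product of independent local counts, one factor for each basic irreducible self-reciprocal polynomial and one for each reciprocal polynomial pair, and then to evaluate each local count explicitly. The key enabling fact is that, by the structure theory underlying Theorem $2.1$ together with the generator-polynomial form $(7)$, every cyclic code of length $n$ over $\mathbb{Z}_{p^m}$ is uniquely determined by the exponent tuple $(\ell_1,\dots,\ell_s,\kappa_1,\lambda_1,\dots,\kappa_t,\lambda_t)$ attached to the basic irreducible factors of $x^n+1$. Hence distinct admissible tuples give distinct codes, and counting self-orthogonal codes becomes the same as counting exponent tuples that satisfy the constraints of Theorem $3.3$.

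First I would invoke Theorem $3.3$: a code with generator polynomial $(7)$ is self-orthogonal precisely when $\lceil m/2\rceil\le \ell_i\le m$ for each $1\le i\le s$ and $m\le \kappa_j+\lambda_j\le 2m$ for each $1\le j\le t$. Because the self-reciprocal exponents $\ell_i$ and the pair-exponents $(\kappa_j,\lambda_j)$ range independently over the distinct irreducible factors of $(6)$, the total count factors as a product over the $s=\gamma(n)$ self-reciprocal polynomials and the $t=\delta(n)$ reciprocal pairs.

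For a single self-reciprocal factor, the admissible exponents $\ell_i$ are exactly the integers in the interval $[\lceil m/2\rceil,\, m]$, of which there are $m-\lceil m/2\rceil+1$; this accounts for the factor $(m-\lceil m/2\rceil+1)^{\gamma(n)}$. For a single reciprocal pair, I would count the lattice points $(\kappa_j,\lambda_j)$ with $0\le \kappa_j,\lambda_j\le m$ and $\kappa_j+\lambda_j\ge m$. Grouping them by the value of the sum $k=\kappa_j+\lambda_j$, there are $2m-k+1$ pairs for each $k$ with $m\le k\le 2m$, so the total is
\begin{equation*}
\sum_{k=m}^{2m}(2m-k+1)=\sum_{i=1}^{m+1} i=\frac{(m+1)(m+2)}{2},
\end{equation*}
which yields the factor $\left(\frac{(m+1)(m+2)}{2}\right)^{\delta(n)}$. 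Multiplying the two local counts over all factors gives the claimed formula.

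The routine part here is the lattice-point summation; the only point that needs care is the justification that the exponent tuple determines the code uniquely, so that no two admissible tuples collapse to the same code and no self-orthogonal code is omitted. I expect this injectivity — which rests on the uniqueness of the factorization $(6)$ and of the generator-polynomial representation $(7)$ — to be the main conceptual obstacle, whereas the arithmetic of the two local counts is entirely straightforward.
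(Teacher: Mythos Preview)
Your proposal is correct and follows essentially the same approach as the paper: the paper's entire argument is the brief paragraph preceding the theorem, which observes from Theorem~3.3 that each self-reciprocal exponent ranges over $\lceil m/2\rceil,\dots,m$ and each pair-sum over $m,\dots,2m$, leaving the independence of the choices and the triangular-number computation implicit. Your write-up is in fact more explicit than the paper on both the lattice-point summation and the bijectivity between exponent tuples and codes, so there is nothing to fix.
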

\begin{Example}
Consider cyclic self-orthogonal codes over $ \mathbb{Z}_{8} $ of length $7$. In $ \mathbb{Z}_{8}[x] $,${x^7}+1 = f_{1}(x)f_{2}(x)f_{3}(x)$ where 
\begin{align*}
f_{1}(x) = x+1  ,    f_{2}(x) = x^3+5x^2+2x+1  ,    f_{3}(x) = x^3+2x^2+5x+1 .
\end{align*}
Here $  f_{1}(x) $ is  self-reciprocal over $ \mathbb{Z}_{8} $,and  $ f_{2}(x)$ and $ f_{3}(x) $ are a reciprocal polynomial pair over $ \mathbb{Z}_{8} $.There exist $20$ cyclic self-orthogonal code over $ \mathbb{Z}_{8} $ of length $7$ . Their generator polynomials are\\ 
$ (1)~ f_{1}(x)^3f_{2}(x)^3f_{3}(x)^3  = 0 $\\
$ (2)~ f_{1}(x)^3f_{2}(x)^3f_{3}(x)^2  =  4x^4+4x^2+4x+4 $\\
$ (3)~ f_{1}(x)^3f_{2}(x)^3f_{3}(x)^1  =  2x^8+4x^6+4x^5+6x^4+4x^3+2 $\\
$ (4)~ f_{1}(x)^3f_{2}(x)^3  = x^{12}+2x^{11}+x^{10}+5x^9+6x^8+2x^6+7x^5+7x^4+3x^3+x+1 $\\
$ (5)~f_{1}(x)^3f_{2}(x)^2f_{3}(x)^3  =  4x^4+4x^3+4x^2+4 $\\
$ (6)~ f_{1}(x)^3f_{2}(x)^1f_{3}(x)^3  =  2x^8+4x^7+6x^6+4x^5+6x^4+4x^3+4x^2+2 $\\
$ (7)~ f_{1}(x)^3f_{3}(x)^3 = x^{12}+x^{11}+3x^9+3x^8+5x^7+4x^6+4x^5+6x^4+5x^3+5x^2+2x+1 $\\
$ (8)~  f_{1}(x)^3f_{2}(x)^2f_{3}(x)^2  =  4x+4 $\\
$ (9)~  f_{1}(x)^3f_{2}(x)^2f_{3}(x)^1  =  2x^5+6x^4+2x^3+4x^2+2 $\\
$ (10)~ f_{1}(x)^3f_{2}(x)^1f_{3}(x)^2  =  2x^5+4x^3+2x^2+6x+2 $\\
$ (11)~ f_{1}(x)^2f_{2}(x)^3f_{3}(x)^3  =  4x^6+4x^5+4x^4+4x^3+4x^2+4x+4 $\\
$ (12)~  f_{1}(x)^2f_{2}(x)^3f_{3}(x)^2  =  4x^3+4x^2+4 $\\
$ (13)~  f_{1}(x)^2f_{2}(x)^3f_{3}(x)^1  =  2x^7+6x^6+6x^5+6x^4+2x+2 $\\
$ (14)~  f_{1}(x)^2f_{2}(x)^3  =  x^{11}+x^{10}+5x^8+x^7+7x^6+3x^5+4x^4+3x^3+1 $\\
$ (15)~  f_{1}(x)^2f_{2}(x)^2f_{3}(x)^3  =  4x^3+4x+4 $\\
$ (16)~  f_{1}(x)^2f_{2}(x)^1f_{3}(x)^3  =  2x^7+2x^6+4x^5+6x^3+6x^2+6x+2 $\\
$ (17)~  f_{1}(x)^2f_{3}(x)^3  =  x^{11}+3x^8+5x^6+7x^5+5x^4+x^3+4x^2+x+1 $\\
$ (18)~  f_{1}(x)^2f_{2}(x)^2f_{3}(x)^2  =  4 $\\
$ (19)~  f_{1}(x)^2f_{2}(x)^1f_{3}(x)^2  =  2x^4+6x^3+6x^2+4x+2 $\\
$ (20)~  f_{1}(x)^2f_{2}(x)^2f_{3}(x)^1  =  2x^4+6x^2+2 $
\end{Example}
\section{ Existence of cyclic self-orthogonal codes over $ \mathbb{Z}_{p^m} $ }
\par In this section , we examine the conditions for the existence of cyclic self-orthogonal codes over $ \mathbb{Z}_{p^m} $. It fallows from Theorem $3.3$ that there are at least two cyclic self-orthogonal codes $ \left\langle 0 \right\rangle $ and$ \left\langle p^{\lceil m/2 \rceil}\right\rangle $ over $ \mathbb{Z}_{p^m} $ for any odd length .  
\begin{Definition}
For any odd length $n$,a cyclic self-orthogonal code over $ \mathbb{Z}_{p^m} $ is said to be trivial cyclic self-orthogonal code if it is contained in $ \left\langle p^{\lceil m/2 \rceil}\right\rangle $ . Otherwise , the code is called the non trivial cyclic self-orthogonal code .
\end{Definition}
\par There are two cases \\
Case $(1)$.
Let us suppose that there does not exist basic reciprocal polynomial pairs in the factorization of  ${x^n}+1$ in  $ \mathbb{Z}_{p^m}[x] $ $i.e.$ $ \delta(n) = 0 $ .   Then, by Theorem $3.3$, a cyclic self-orthogonal code C over $ \mathbb{Z}_{p^m} $  of length $n$ has generator polynomial  $ G(x) = {f_{1}(x)}^{\ell_{1}}. . . . . {f_{\gamma(n)}(x)}^{\ell_{\gamma(n)}} $ ,where $ \lceil m/2 \rceil\leq \ell_{i} \leq m $ for each $ 0\leq i\leq \gamma(n) $.Thus, we have
\begin{align*}
 G(x) &= ({x^n}+1)^{\lceil m/2 \rceil}f_{1}(x)^{\ell_{1}-\lceil m/2 \rceil} . . . . f_{\gamma(n)}(x)^{\ell_{\gamma(n)}-\lceil m/2 \rceil}\\
&= p^{\lceil m/2 \rceil}f_{1}(x)^{\ell_{1}-{\lceil m/2 \rceil}} . . . . f_{\gamma(n)}(x)^{\ell_{\gamma(n)}-{\lceil m/2 \rceil}}.
\end{align*}
It fallows that $ C = \left\langle G(x)\right\rangle $ is contained in $ \left\langle p^{\lceil m/2 \rceil}\right\rangle $. Therefore , $C$ is a trivial cyclic self-orthogonal code over $ \mathbb{Z}_{p^m} $.\\
Case $(2)$.
suppose that there exist basic irreducible reciprocal polynomial pairs in the factorization of  ${x^n}+1$ in  $ \mathbb{Z}_{p^m}[x] $, $i.e.$ $ \delta(n)\neq 0 $ . Then, by using Theorem $3.3$ , the cyclic code  over $ \mathbb{Z}_{p^m} $ of length $n$ with generator polynomial\\
\begin{equation*}
G(x) =  f_{1}(x)^m . . . . f_{\gamma(n)}(x)^mh_{1}(x)^m . . . . h_{\delta(n)}(x)^m.
\end{equation*}
is self-orthogonal. Thus above code is not of the form $ \left\langle p^{\lceil m/2 \rceil}\right\rangle $ . Hence , it is a nontrivial cyclic self-orthogonal code over $ \mathbb{Z}_{p^m} $ . Thus , we have obtained the fallowing result.
\begin{Lemma}
For any odd integer $n$, non trivial cyclic self-orthogonal code over $\mathbb{Z}_{p^m} $ of length  $n$ exist if and only if the number $\delta(n)$ of  basic irreducible reciprocal polynomial pairs in the in the factorization of  ${x^n}+1$ in  $ \mathbb{Z}_{p^m}[x] $ is non zero .
\end{Lemma}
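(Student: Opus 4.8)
The plan is to prove the two implications separately, translating the notion of triviality (a self-orthogonal code contained in $\langle p^{\lceil m/2\rceil}\rangle$) into an arithmetic condition on the exponents in the generator polynomial $(7)$, and then reading off each direction from Theorem $3.3$. The key preliminary observation is that, up to a unit, $p$ plays the role of $x^n+1$ in the quotient ring, so that the trivial code $\langle p^{\lceil m/2\rceil}\rangle$ has generator polynomial $\prod_{i=1}^{s} f_i(x)^{\lceil m/2\rceil}\prod_{j=1}^{t} h_j(x)^{\lceil m/2\rceil}h_j^*(x)^{\lceil m/2\rceil}$, i.e.\ the code obtained by setting every exponent in $(7)$ equal to $\lceil m/2\rceil$. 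Combining this with the standard fact (already used in the proof of Theorem $3.3$) that for cyclic codes $C\subseteq C'$ if and only if the generator polynomial of $C'$ divides that of $C$, a self-orthogonal code $C$ with generator $(7)$ is contained in $\langle p^{\lceil m/2\rceil}\rangle$ precisely when $\kappa_j\ge\lceil m/2\rceil$ and $\lambda_j\ge\lceil m/2\rceil$ for every $1\le j\le t$ (the bounds $\ell_i\ge\lceil m/2\rceil$ being automatic from Theorem $3.3$). Thus a self-orthogonal code is non-trivial if and only if some reciprocal pair carries an exponent $\kappa_j$ or $\lambda_j$ strictly below $\lceil m/2\rceil$.

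With this reformulation both directions follow quickly. For the forward implication I would argue by contraposition: if $\delta(n)=0$ there are no pairs $(h_j,h_j^*)$ at all, so the condition $\kappa_j,\lambda_j\ge\lceil m/2\rceil$ holds vacuously for every self-orthogonal $C$; hence every such code lies in $\langle p^{\lceil m/2\rceil}\rangle$ and is trivial, which is exactly the content of Case $(1)$ above. For the converse, assuming $\delta(n)\neq 0$ I would exhibit an explicit witness: take the generator polynomial with $\ell_i=m$ for all $i$, with $\kappa_1=m$, $\lambda_1=0$, and with $\kappa_j=\lambda_j=m$ for $j\ge 2$. Theorem $3.3$ certifies self-orthogonality, since each $\ell_i=m\ge\lceil m/2\rceil$ and each $\kappa_j+\lambda_j\in\{m,2m\}$ lies in $[m,2m]$; and because $\lambda_1=0<\lceil m/2\rceil$ (as $m\ge 1$), the code fails the triviality condition and is therefore a non-trivial cyclic self-orthogonal code. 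This is a sharper instance of the code written down in Case $(2)$.

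The only genuinely delicate point is the preliminary identification of $\langle p^{\lceil m/2\rceil}\rangle$ with the all-exponents-$\lceil m/2\rceil$ cyclic code, since this is what lets me compare an arbitrary self-orthogonal $C$ against the trivial code purely through divisibility of generator polynomials. I would justify it exactly as in Case $(1)$: from $\prod_i f_i(x)\prod_j h_j(x)h_j^*(x)=x^n+1$ together with the relation $(x^n+1)^{\lceil m/2\rceil}=p^{\lceil m/2\rceil}$ used there, the generator of $\langle p^{\lceil m/2\rceil}\rangle$ is $(x^n+1)^{\lceil m/2\rceil}$, whose factorization assigns the exponent $\lceil m/2\rceil$ to every basic irreducible factor. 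Once this is in place the exponent bookkeeping is routine, and the whole lemma reduces to the observation that an asymmetric choice of $(\kappa_j,\lambda_j)$ — available exactly when a reciprocal pair exists, that is, when $\delta(n)\neq 0$ — is precisely the mechanism that produces self-orthogonality without forcing containment in $\langle p^{\lceil m/2\rceil}\rangle$.
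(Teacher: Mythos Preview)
Your proposal is correct and follows essentially the same approach as the paper, whose proof consists of the two Cases preceding the lemma: Case~(1) shows that $\delta(n)=0$ forces every self-orthogonal code into $\langle p^{\lceil m/2\rceil}\rangle$ by factoring out $(x^n+1)^{\lceil m/2\rceil}=p^{\lceil m/2\rceil}$, and Case~(2) exhibits an explicit non-trivial witness when $\delta(n)\neq 0$. The only minor differences are that your preliminary exponent characterization of triviality makes the non-triviality check in the converse direction more transparent than the paper's bare assertion, and your witness sets only $\lambda_1=0$ whereas the paper's witness $G(x)=\prod_i f_i(x)^m\prod_j h_j(x)^m$ sets every $\lambda_j=0$; both choices work equally well.
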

\begin{Theorem}
For any odd integer $n$,non trivial cyclic self-orthogonal code over $\mathbb{Z}_{p^m}$ of length $n$ exist if and only if ${p^{i}\neq{-1}}(\textnormal{ mod}~ n )$ for any positive integer ${i}$.
\end{Theorem}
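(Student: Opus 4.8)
The plan is to reduce the whole statement to elementary arithmetic via the root structure of $x^n+1$. By Lemma $4.2$, a non-trivial cyclic self-orthogonal code of length $n$ exists precisely when $\delta(n)\neq 0$, so it suffices to prove the equivalent contrapositive: $\delta(n)=0$ if and only if $p^i\equiv -1 \pmod n$ for some positive integer $i$. Since Hensel's lemma sets up a bijection between the monic basic irreducible factors of $x^n+1$ over $\mathbb{Z}_{p^m}$ and the irreducible factors of $x^n+1$ over $\mathbb{F}_p$, and since reduction modulo $p$ commutes with forming the reciprocal polynomial (so $\widetilde{f^*}=(\tilde f)^*$, using that a divisor of $x^n+1$ has unit constant term), a factor is self-reciprocal, respectively part of a reciprocal pair, exactly when its reduction is. Hence $\delta(n)$ equals the number of reciprocal pairs in the factorization over $\mathbb{F}_p$, and I may work entirely in $\mathbb{F}_p$ and its algebraic closure.

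Next I would translate self-reciprocity into a statement about Frobenius orbits. Assuming first that $p$ is odd (so $\gcd(2n,p)=1$), fix a primitive $2n$-th root of unity $\zeta$ in $\overline{\mathbb{F}_p}$; the roots of $x^n+1$ are exactly the $\zeta^j$ with $j$ odd modulo $2n$. The irreducible factors correspond to the $p$-cyclotomic cosets $\{j,pj,p^2j,\dots\}$ modulo $2n$, the Frobenius map acting as $j\mapsto pj$, while passing to the reciprocal polynomial corresponds to the inversion $j\mapsto -j$. Therefore a factor is self-reciprocal if and only if its cyclotomic coset is invariant under negation, and $\delta(n)=0$ if and only if every such coset is closed under $j\mapsto -j$.

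The key step is then to show that this global symmetry is governed by the single coset of $j=1$. If that coset is closed under negation, i.e. $p^{i_0}\equiv -1 \pmod{2n}$ for some $i_0$, then for every odd $j$ one has $p^{i_0}j\equiv -j \pmod{2n}$, so every coset is symmetric; the converse is immediate. Thus $\delta(n)=0$ iff $p^i\equiv -1\pmod{2n}$ for some $i$. Finally, because $p$ is odd we have $p^i\equiv 1\equiv -1 \pmod 2$ automatically, so by the Chinese Remainder Theorem (with $\gcd(2,n)=1$) the congruence $p^i\equiv -1\pmod{2n}$ is equivalent to $p^i\equiv -1 \pmod n$. The case $p=2$ is handled the same way after noting that modulo $2$ one has $x^n+1=x^n-1$, whose roots are the $n$-th roots of unity, leading again to the condition $2^i\equiv -1 \pmod n$. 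Combining with Lemma $4.2$ yields the theorem.

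The main obstacle I anticipate is the bookkeeping around the factor of $2$: one must verify that for odd $p$ the orders of the roots of $x^n+1$ run over exactly $\{2e' : e'\mid n\}$, and that the condition modulo $2n$ genuinely collapses to one modulo $n$ rather than the naive guess of $2n$. The other delicate point is the forward direction of the ``one coset suffices'' reduction, namely that symmetry of the coset of $1$ forces symmetry of all cosets; this carries the real content. The remaining verifications---that reduction modulo $p$ preserves the reciprocal structure and that inversion corresponds to the reciprocal polynomial---are routine.
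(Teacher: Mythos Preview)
Your proof is correct and follows essentially the same cyclotomic-coset approach as the paper: both reduce via Lemma~4.2 to showing that $\delta(n)=0$ iff $-1$ lies in the $p$-cyclotomic coset of $1$, using that passing to the reciprocal of an irreducible factor corresponds to negating the exponent, and both rely on the ``one coset suffices'' observation (which you make explicit and the paper leaves implicit). The only real difference is bookkeeping: you index the roots of $x^n+1$ by odd residues modulo $2n$ via a primitive $2n$-th root $\zeta$, obtain the condition $p^i\equiv -1\pmod{2n}$, and then collapse it to a congruence modulo $n$ by CRT (treating $p=2$ separately); the paper instead writes the roots as $-\epsilon^j$ for a primitive $n$-th root $\epsilon$ in $GR(p^m,\nu)$ and works directly with $p$-cyclotomic cosets modulo $n$, which avoids the modulus-$2n$ detour and handles all $p$ uniformly.
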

\begin{proof}
Let $\gamma$ be the multiplicative order of $ p\textnormal {mod}~(n)$ . Then , there exists a primitive $n$th root of unity in $GR(p^m,\nu)$. Let $\nu_{i}$ be size of the $p$-cyclotomic coset modulo $n$ containing $i$ for each $ i\in{ 0,1,2,. . . . ,n-1}$ . Let f(x) be the a monic basic irreducible divisor of  ${x^n}+1$ in $ \mathbb{Z}_{p^m}[x] $ . Then f(x) must have root $-{\epsilon}^i $ for some $ i\in {0,1,. . . . . ,n-1}$ . This fallows that set $(-\epsilon^j | j = i,pi,. . . . ,p^{\nu_{i}-1}i)$ have all distinct roots of $f(x)$ .The reciprocal polynomial of $f(x)$ is also monic basic irreducible in  $ \mathbb{Z}_{p^m}[x] $ and has all distinct roots  $(-\epsilon^{-j} | j = i,pi,. . . . ,p^{\nu_{i}-1}i)$ . If basic reciprocal polynomial pairs $\delta(n)$ in the factorization of  ${x^n}+1$ in  $ \mathbb{Z}_{p^m}[x] $ is non zero if and only if there exist monic basic irreducible divisors $ f(x)$ and $ f(x)^* $ in the factorization of  ${x^n}+1$ over $ \mathbb{Z}_{p^m} $ such that $  f(x)\neq  f(x)^* $ ;  if and only if $ i $ and $n-i$ are not in the same cyclotomic coset .This implies that  ${p^{i}\neq{-1}}(\textnormal{ mod}~ n )$ for any positive integer ${i}$.This conditions is necessary and sufficient for the existance of non-trivial cyclic self-orthogonal code over $\mathbb{Z}_{p^m}$ .
\end{proof}
\par Next,the next corollary gives the number of trivial and non-trivial cyclic self-orthogonal code over $\mathbb{Z}_{p^m}$ for given length . Let $C$ be a cyclic code of length $n$ over $\mathbb{Z}_{p^m}$ with generator polynomial given in $(7)$. Combining Theorem $3.3$ and Definition $4.1$ , we have C is trivial self-orthogonal if and only if $C \subseteq \left\langle p^{\lceil m/2\rceil}\right\rangle $,$ \lceil m/2\rceil \leq \ell_{i} \leq m $ for each $ 1 \leq i \leq \gamma(n)$ and $ \lceil m \rceil \leq {\kappa_{j}+\lambda_{j}}\leq 2m $ for each $ 1\leq j \leq \delta(n)$,which means that $ \lceil m/2 \rceil \leq \ell_{i}\leq m$ for each $ 1 \leq i \leq \gamma(n) $ and $ \lceil m/2 \rceil \leq {\kappa_{j},\lambda_{j}}\leq m $ for each $ 1\leq j \leq \delta(n)$.
\begin{Corollary}
Let ${x^n}+1$ have the unique factorization over $\mathbb{Z}_{p^m}$ as given in $(6)$,where $ s = \gamma(n)$ and $ t = \delta(n) $\\
$(1)$~ The number of trivial cyclic self-orthogonal codes over$ \mathbb{Z}_{p^m}$ of length $n$ is
\begin{equation}
\left(m-{\lceil\dfrac{m}{2}\rceil} + 1\right)^{\gamma(n)+2\delta(n)}
\end{equation}
$(2)$~ The number of nontrivial cyclic self-orthogonal codes over $ \mathbb{Z}_{p^m}$ of length $n$ is
\begin{equation}
\left(m-{\lceil \dfrac{m}{2} \rceil} + 1\right)^{\gamma(n)}\left[\left(\dfrac{(m+1)(m+2)}{2}\right)^{\delta(n)}-\left(m-{\lceil \dfrac{m}{2}\rceil} + 1\right)^{2\delta(n)}\right].
\end{equation}
\end{Corollary}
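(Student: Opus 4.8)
The plan is to derive both counts from Theorem $3.3$ together with the characterization of trivial codes recorded immediately before the statement, reducing everything to counting admissible exponent tuples $(\ell_1,\dots,\ell_{\gamma(n)},\kappa_1,\lambda_1,\dots,\kappa_{\delta(n)},\lambda_{\delta(n)})$. Recall that a cyclic code with generator polynomial $(7)$ is self-orthogonal precisely when $\lceil m/2\rceil\le\ell_i\le m$ for each $i$ and $m\le\kappa_j+\lambda_j\le 2m$ for each $j$, while it is trivial precisely when, in addition, $C\subseteq\langle p^{\lceil m/2\rceil}\rangle$, which forces the stronger componentwise constraint $\lceil m/2\rceil\le\kappa_j,\lambda_j\le m$ on every member of each reciprocal pair. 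Since $\kappa_j,\lambda_j\ge\lceil m/2\rceil$ already yields $\kappa_j+\lambda_j\ge 2\lceil m/2\rceil\ge m$, every trivial code is self-orthogonal, so the trivial codes form a genuine subset of the self-orthogonal codes and the non-trivial count can be obtained by subtraction.

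For part $(1)$ I would count directly. Each self-reciprocal factor $f_i$ contributes an independent exponent $\ell_i$ ranging over $[\lceil m/2\rceil,m]$, giving $m-\lceil m/2\rceil+1$ possibilities, and there are $\gamma(n)$ such factors. For each reciprocal pair $(h_j,h_j^*)$ the trivial condition forces $\kappa_j$ and $\lambda_j$ each into $[\lceil m/2\rceil,m]$ independently, contributing $(m-\lceil m/2\rceil+1)^2$ possibilities per pair across the $\delta(n)$ pairs. Multiplying these independent contributions gives $(m-\lceil m/2\rceil+1)^{\gamma(n)}\cdot(m-\lceil m/2\rceil+1)^{2\delta(n)}$, which is exactly $(10)$.

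For part $(2)$ I would invoke Theorem $3.4$ for the total number of cyclic self-orthogonal codes, namely $(m-\lceil m/2\rceil+1)^{\gamma(n)}\big((m+1)(m+2)/2\big)^{\delta(n)}$, and subtract the trivial count from $(1)$. Factoring the common factor $(m-\lceil m/2\rceil+1)^{\gamma(n)}$ out of the difference yields $(11)$ at once. As a consistency check underlying the total count, I would note that for a single reciprocal pair the conditions $m\le\kappa_j+\lambda_j\le 2m$ with $0\le\kappa_j,\lambda_j\le m$ admit exactly $\sum_{s=m}^{2m}(2m-s+1)=\frac{(m+1)(m+2)}{2}$ solutions, and that for a single self-reciprocal factor the interval $[\lceil m/2\rceil,m]$ has $m-\lceil m/2\rceil+1$ members.

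The step I expect to be the genuine obstacle is the equivalence $C\subseteq\langle p^{\lceil m/2\rceil}\rangle\iff\lceil m/2\rceil\le\kappa_j,\lambda_j$ for each $j$: the self-orthogonality analysis only constrains the \emph{sum} $\kappa_j+\lambda_j$, so one must argue that containment in $\langle p^{\lceil m/2\rceil}\rangle$ bounds each exponent separately from below. I would establish this from the generator-polynomial description of $\langle p^{\lceil m/2\rceil}\rangle$ noted after Theorem $3.3$ (all exponents equal to $\lceil m/2\rceil$), together with the divisibility criterion for code inclusion employed in the proof of Theorem $3.3$: $C\subseteq\langle p^{\lceil m/2\rceil}\rangle$ is equivalent to the generator of $\langle p^{\lceil m/2\rceil}\rangle$ dividing $G(x)$, and comparing the exponents of each basic irreducible factor then forces $\ell_i\ge\lceil m/2\rceil$, $\kappa_j\ge\lceil m/2\rceil$ and $\lambda_j\ge\lceil m/2\rceil$ individually. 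Once this componentwise characterization is in hand, both counts are routine products and the subtraction is purely formal.
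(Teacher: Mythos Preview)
Your proposal is correct and follows essentially the same approach as the paper: the paragraph immediately preceding the corollary already records the componentwise characterization $\lceil m/2\rceil\le \ell_i,\kappa_j,\lambda_j\le m$ for trivial self-orthogonal codes, and the paper then states the corollary without further proof, treating the counts as immediate from this characterization together with Theorem~3.4. Your argument simply makes the routine product-and-subtraction counting explicit and supplies the divisibility justification for the componentwise bound that the paper asserts without detail.
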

\section{ Cyclic self-dual codes over $\mathbb{Z}_{p^m}$ }
\par In this section ,we explain cyclic self-dual codes over $\mathbb{Z}_{p^m}$ for any odd length . We know that ${x^n}+1$ can factored into monic basic irreducible polynomials in $\mathbb{Z}_{p^m}[x]$ written as
\begin{equation}
{x^n}+1 = f_{1}(x) . . . . f_{\gamma(n)}(x)h_{1}(x)h_{1}^*(x) . . . . h_{\delta(n)}(x)h_{\delta(n)}^*(x)
\end{equation}
where $ f_{i}(x)(1\leq i\leq \gamma(n))$ are basic irreducible self-reciprocal polynomials in $\mathbb{Z}_{p^m}[x]$ and $ h_{j}(x) $ and $ h_{j}^*(x) (1\leq j\leq \delta(n))$ are basic irreducible reciprocal polynomial pairs in $ \mathbb{Z}_{p^m}[x] $ . Let C be a cyclic code of length $n$ over  $ \mathbb{Z}_{p^m} $ with generator polynomial 
\begin{equation}
G(x) = {f_{1}(x)}^{\ell_{1}}. . . . . {f_{\gamma(n)}(x)}^{\ell_{\gamma(n)}}{h_{1}(x)}^{\kappa_{1}}{h_{1}^*(x)}^{\lambda_{1}}. . . . . {h_{\delta(n)}(x)}^{\kappa_{\delta(n)}}{h_{\delta(n)}^*(x)}^{\lambda_{\delta(n)}}
\end{equation}
where $ 0 \leq {\ell_{i},\kappa_{j},\lambda_{j}}\leq m $ for each $ 1\leq i\leq \gamma(n)$ and $ 1\leq j\leq \delta(n) $ . Then $ C^{\bot} $ has generator polynomial
\begin{equation}
 G^{\bot}(x) = {f_{1}(x)}^{m-\ell_{1}} . . . . {f_{\gamma(n)}(x)}^{m-\ell_{\gamma(n)}}{h_{1}(x)}^{m-\lambda_{1}}{h_{1}^*(x)}^{m-\kappa_{1}} . . . . . {h_{\delta(n)}(x)}^{m-\lambda_{\delta(n)}}{h_{\delta(n)}^*(x)}^{m-\kappa_{\delta(n)}}.
\end{equation}
The code C is self-dual if and only if $ G(x) = G^{\bot}(x) $,if and only if $ 2\ell_{i} = m $ for each $1\leq i\leq \gamma(n)$ and $ {\kappa_{j}+\lambda_{j}} = m $ for each $1\leq j\leq \delta(n)$ . Thus , we have the following result.\\
\begin{Theorem}
For any odd length $n$, cyclic self-dual codes over $ \mathbb{Z}_{p^m} $ exist only if $m$ is even . In the case when $m$ is even , let  ${x^n}+1$ be factored in  $ \mathbb{Z}_{p^m}[x] $ as in $(12)$ .A cyclic code over $ \mathbb{Z}_{p^m} $ of length $n$ is self-dual if and only if its generator polynomial has the form 
\begin{equation}
f_{1}(x)^{m/2}. . . . . f_{\gamma(n)}(x)^{m/2}h_{1}(x)^{\kappa_{1}}h_{1}^*(x)^{m-\kappa_{1}}. . . . . h_{\delta(n)}(x)^{\kappa_{\delta(n)}}h_{\delta(n)}^*(x)^{m-\kappa_{\delta(n)}}
\end{equation}
where $ 0\leq \kappa_{i}\leq m $ for each $1\leq j\leq \delta(n)$. 
\end{Theorem}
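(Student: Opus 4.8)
The plan is to reduce self-duality to a comparison of exponents in two explicit generator polynomials. Since $\mathbb{Z}_{p^m}[x]/\langle x^n-1\rangle$ is a principal ideal ring, a cyclic code is determined by its monic generator polynomial, so $C = C^\bot$ is equivalent to the polynomial identity $G(x) = G^\bot(x)$, with $G(x)$ as in $(13)$ and, by the dual-generator lemma (Lemma $3.2$), $G^\bot(x)$ as in $(14)$. First I would write both sides over the common factorization $(12)$ and compare the exponent attached to each basic irreducible factor.

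Matching the exponent of every self-reciprocal factor $f_i(x)$ gives $\ell_i = m-\ell_i$, i.e. $2\ell_i = m$; matching the exponents of a reciprocal pair $h_j(x), h_j^*(x)$ gives $\kappa_j = m-\lambda_j$ and $\lambda_j = m-\kappa_j$, both of which collapse to the single relation $\kappa_j+\lambda_j = m$. These are exactly the conditions recorded in the paragraph preceding the theorem, so this step is routine once uniqueness of the generator is invoked. The crux of the necessity statement lies in the fact that a self-reciprocal factor is always present: since $n$ is odd, $x=-1$ is a root of $x^n+1$, so $x+1 \mid x^n+1$, and $x+1$ is monic, basic irreducible, and self-reciprocal. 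Hence $\gamma(n) \ge 1$, the relation $2\ell_i = m$ must hold for at least one index $i$, and therefore $m$ is forced to be even. This is the step I expect to carry the real content of the "only if'' direction; without observing that the factor $x+1$ guarantees $\gamma(n)\ge 1$, one might erroneously allow odd $m$ in the case $\gamma(n)=0$.

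For the characterization when $m$ is even, I would note that each $\ell_i = m/2$ is now a legitimate integer exponent, while for every reciprocal pair one is free to choose $\kappa_j \in \{0,1,\dots,m\}$ and set $\lambda_j = m-\kappa_j$. Substituting these values into $(13)$ yields precisely the form $(15)$, and conversely any generator of that form satisfies $G(x)=G^\bot(x)$ by the exponent computation above, which establishes the stated equivalence. Finally, this description makes the $\delta(n)$ independent parameters $\kappa_j$ manifest, each ranging over $m+1$ values, in agreement with the enumeration philosophy of Theorem $3.5$.
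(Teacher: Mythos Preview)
Your approach is essentially the same as the paper's: the argument recorded just before Theorem~5.1 compares $G(x)$ from $(13)$ with $G^\bot(x)$ from $(14)$ factor by factor and reads off $2\ell_i=m$ and $\kappa_j+\lambda_j=m$, exactly as you do. Your write-up is in fact slightly more complete than the paper's, since you explicitly justify the ``only if $m$ is even'' clause by observing that $x+1$ is always a basic irreducible self-reciprocal divisor of $x^n+1$ (as $n$ is odd), forcing $\gamma(n)\ge 1$; the paper states $2\ell_i=m$ for each $i$ but leaves unspoken why there is at least one such $i$.
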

\par Note that the exponents of the polynomials of each irreducible reciprocal polynomial pair should sum upto $ m $ . The number of choices of each $ \kappa_{i}$ is exactly $ m+1 $  . Thus ,we get following result.\\
\begin{Corollary} 
Assume that $m$ is even . Let ${x^n}+1$ be factorized in $ \mathbb{Z}_{p^m}[x] $ as in $(12)$ . Then the number of cyclic self-dual codes of length $n$ over $ \mathbb{Z}_{p^m} $ is $ (m+1)^{\delta(n)} $ . Moreover,the code $\left\langle p^m\right\rangle$ is the unique trivial cyclic self-dual code .
\end{Corollary}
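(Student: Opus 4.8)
The plan is to read both assertions off Theorem $5.1$, which (for $m$ even) identifies the cyclic self-dual codes of length $n$ with exactly those whose generator polynomial has the form $(15)$. First I would record the structure of that form: the exponents of the self-reciprocal factors $f_i$ are all forced to equal $m/2$, while for each reciprocal pair $(h_j,h_j^*)$ the exponent $\kappa_j$ may be chosen freely subject to $0\le\kappa_j\le m$, after which $\lambda_j=m-\kappa_j$ is determined. Thus a cyclic self-dual code is specified precisely by the tuple $(\kappa_1,\dots,\kappa_{\delta(n)})$, with the $f_i$-part contributing no freedom.

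For the counting statement I would argue that the assignment sending such a tuple to its generator polynomial $(15)$ is a bijection onto the set of cyclic self-dual codes. Surjectivity is exactly the ``if and only if'' of Theorem $5.1$. For injectivity I would invoke uniqueness of the factorization $(12)$ of $x^n+1$ into basic irreducible polynomials (Hensel's lemma) together with uniqueness of the generator polynomial supplied by the structure results of Section $3$ (Theorem $3.1$ and Lemma $3.2$): two distinct tuples produce two distinct monic generators, hence two distinct ideals. Since each of the $\delta(n)$ pairs independently admits $m+1$ admissible values of $\kappa_j$, the number of cyclic self-dual codes is $(m+1)^{\delta(n)}$. The degenerate case $\delta(n)=0$ yields the single value $1$, which is consistent with the second assertion.

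For the ``moreover'' part I would combine Definition $4.1$ with the trivial-code characterization established just before Corollary $4.2$: a cyclic self-orthogonal code lies in $\langle p^{\lceil m/2\rceil}\rangle=\langle p^{m/2}\rangle$ if and only if every exponent in its generator polynomial is at least $m/2$, i.e. $\ell_i\ge m/2$, $\kappa_j\ge m/2$ and $\lambda_j\ge m/2$. Imposing this on a self-dual code, where $\lambda_j=m-\kappa_j$, forces $\kappa_j\ge m/2$ and $m-\kappa_j\ge m/2$ simultaneously, whence $\kappa_j=\lambda_j=m/2$ for every $j$, while already $\ell_i=m/2$. This lone surviving choice gives the generator $\prod f_i^{m/2}\prod h_j^{m/2}(h_j^*)^{m/2}=(x^n+1)^{m/2}$, which is the code $\langle p^{m/2}\rangle$ via the identification of $(x^n+1)^{m/2}$ with $p^{m/2}$ used in Case $(1)$ of Section $4$. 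Hence $\langle p^{m/2}\rangle$ is a trivial cyclic self-dual code, and the computation above shows it is the only one.

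The only point requiring genuine care is the injectivity step in the count, namely that two different exponent tuples cannot generate the same ideal; I expect this to be the main obstacle and would settle it through the uniqueness of the generator polynomial, equivalently through the cardinality formula of Theorem $3.1$, since codes with different tuples differ either in size or in their monic generators. The remaining steps are bookkeeping with the constraints $0\le\kappa_j\le m$ and $\lambda_j=m-\kappa_j$.
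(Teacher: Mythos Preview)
Your proposal is correct and follows the same line as the paper. The paper's entire justification is the sentence immediately preceding the corollary (``the number of choices of each $\kappa_i$ is exactly $m+1$''), so your bijection argument is simply a more careful version of that count; the ``moreover'' clause is not proved separately in the paper, and your derivation of it from Definition~4.1 and the constraint $\kappa_j+\lambda_j=m$ is the intended reading. You also rightly work with $\langle p^{m/2}\rangle$ rather than the stated $\langle p^m\rangle$, which is evidently a typo in the paper (since $p^m=0$ in $\mathbb{Z}_{p^m}$).
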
 
\begin{Definition}
Euclidean weight of an element $e$ in  $ \mathbb{Z}_{p^m} $ is defined as min[${e^2},(p^m-e)^2$] For an $n$-tuple $ c \in { \mathbb{Z}^n_{p^m}} $, the Euclidean weight $ \omega_{E}(c)$ of $c$ is defined as the rational sum of the Euclidean weights of all its components . The minimum Euclidean weight $ d_{E}(C)$ of a linear code C over  $ \mathbb{Z}_{p^m} $ is the smallest Euclidean weight among all non-zero codewords of C . A self-dual over $ \mathbb{Z}_{p^m} $ is called Type $II$ if all codewords have Euclidean weights a multiple of $p^{m+1}$, otherwise it is called Type $I$.
\end{Definition}
\begin{Theorem}
For any odd length $n$, every cyclic self-dual code over $ \mathbb{Z}_{p^m} $ is of Type ${I}$.\
\end{Theorem}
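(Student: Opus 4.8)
The plan is to deduce the statement from a single, explicitly exhibited codeword whose Euclidean weight fails to be a multiple of $p^{m+1}$. First recall from Theorem $5.1$ that a cyclic self-dual code over $\mathbb{Z}_{p^m}$ of odd length can occur only when $m$ is even, so throughout I would assume $m$ is even and keep $m/2$ as the relevant half-exponent. Since ``Type $II$'' demands that \emph{every} codeword have Euclidean weight divisible by $p^{m+1}$, it suffices to produce one codeword $c\in C$ with $p^{m+1}\nmid\omega_{E}(c)$; the code is then forced to be of Type $I$.

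The first step is to note that for a self-orthogonal code divisibility by $p^{m}$ is automatic, so the whole question lives one $p$-adic digit higher. Indeed, for any $c\in C\subseteq C^{\bot}$ we have $c\cdot c=\sum_{i}c_{i}^{2}\equiv 0\ (\mathrm{mod}\ p^{m})$, while each coordinate contributes $\min[c_{i}^{2},(p^{m}-c_{i})^{2}]$, both candidates being congruent to $c_{i}^{2}$ modulo $p^{m}$. Summing gives $\omega_{E}(c)\equiv\sum_{i}c_{i}^{2}\equiv 0\ (\mathrm{mod}\ p^{m})$ for every codeword. Hence $C$ is of Type $II$ exactly when $\omega_{E}(c)/p^{m}\equiv 0\ (\mathrm{mod}\ p)$ for all $c$, and to prove Type $I$ I only need one codeword for which $\omega_{E}(c)/p^{m}$ is a unit modulo $p$.

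The candidate I would use is the all-constant word $c=p^{m/2}\mathbf{1}$, where $\mathbf{1}=(1,1,\dots,1)$. Its Euclidean weight is immediate: each coordinate equals $p^{m/2}$, and since $(p^{m}-p^{m/2})^{2}=p^{m}(p^{m/2}-1)^{2}\geq p^{m}$, every coordinate contributes exactly $p^{m}$, so $\omega_{E}(c)=n\,p^{m}$. Because we are in the simple-root regime $\gcd(n,p)=1$ with $n$ odd, $p\nmid n$ and therefore $p^{m+1}\nmid n\,p^{m}$; this word alone certifies Type $I$ once we know it lies in $C$.

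The step I expect to be the main obstacle is exactly the membership $p^{m/2}\mathbf{1}\in C$. The natural route is the evaluation-at-$1$ functional: writing $c(x)=a(x)G(x)$ gives $c(1)=a(1)\,G(1)$, so the image of $C$ under $\mathbf{c}\mapsto\sum_{i}c_{i}=c(1)$ is the ideal $\langle G(1)\rangle$, and $p^{m/2}\mathbf{1}\in C^{\bot}=C$ will follow once $p^{m/2}\mid G(1)$. Here the self-reciprocal factor of $x^{n}+1$ attached to the root $-1$ (namely $x+1$) is decisive: in the self-dual form $(15)$ it enters with exponent $m/2$, and controlling the $p$-adic valuation that this factor forces on $G(1)$ is the crux. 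In the binary setting $\mathbb{Z}_{2^{m}}$ motivating the paper, $x+1$ contributes valuation exactly $m/2$ to $G(1)$ while all other factors are units at $1$, which closes the argument cleanly; for general odd $p$ that factor becomes a unit at $1$, so one must replace $p^{m/2}\mathbf{1}$ by another low-weight witness. The reason a suitable witness must always exist is that Construction~$A$ carries a self-dual code to a unimodular lattice of dimension $n$, which can be even (Type $II$) only when $8\mid n$ --- impossible for odd $n$; making this into a uniform elementary witness is the part that requires care.
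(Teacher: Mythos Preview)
Your choice of witness $c=p^{m/2}\mathbf{1}$ is different from the paper's: the paper splits into the trivial case $C=\langle p^{m/2}\rangle$ (handled by inspection) and the non-trivial case $\delta(n)\ne 0$, where it exhibits instead $F(x)=p^{m/2}\bigl[h_1(x)\cdots h_{\delta(n)}(x)\bigr]^{m/2}$ and argues that its reduction modulo $p$ has an odd number of nonzero coefficients, forcing $\omega_E(F)=p^m(1+pl)$. Your single witness is more economical, and the weight computation $\omega_E(p^{m/2}\mathbf{1})=np^{m}$ with $p\nmid n$ is cleaner than the paper's parity argument.

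The gap is exactly where you flag it: you never establish $p^{m/2}\mathbf{1}\in C$, and your fallback to Construction~$A$ and unimodular lattices is neither elementary nor carried out. Your $G(1)$ heuristic is tied to the paper's $x^n+1$ generator convention and, as you note, collapses for odd $p$. But the membership holds for every $p$ and can be proved in one line using only that $C$ is cyclic and self-dual: for any $c\in C$ and any $k$ we have $\tau^k(c)\in C=C^{\perp}$, hence $c\cdot\tau^k(c)=0$ in $\mathbb{Z}_{p^m}$; summing over $k=0,\dots,n-1$ gives
\[
0=\sum_{k=0}^{n-1}\sum_{i=0}^{n-1}c_i c_{i-k}=\Bigl(\sum_{i=0}^{n-1}c_i\Bigr)^{2}\quad\text{in }\mathbb{Z}_{p^m},
\]
so $p^{m/2}\mid\sum_i c_i$ and therefore $p^{m/2}\mathbf{1}\cdot c=0$ for all $c\in C$. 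Thus $p^{m/2}\mathbf{1}\in C^{\perp}=C$, and your argument closes uniformly in $p$ without any case split or lattice input. With this step supplied, your route is shorter than the paper's.
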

\begin{proof}
Let C be the cyclic self-dual code over $ \mathbb{Z}_{p^m} $ of length $n$ with generator polynomial $ G(x) $ as in $(15)$.If C be the trivial  self-dual code $\left\langle p^{m/2}\right\rangle $ ,then it is obvious that C is of Type $I$. Hence , we only we need to consider the case that C is non trivial . In this case , $ \delta(n)\neq {0} $. For each $1\leq i \leq \delta(n)$, note that $ \kappa_{i} $ and ${m-\kappa_{i}}$ must be in the range $ m/2 $ and $m$. So ,we can have that the generator polynomial $G(x)$ as
\begin{equation}
G(x) = {f_{1}(x)}^{m/2}. . . . .{ f_{\gamma(n)}(x)}^{m/2}h_{1}(x)^{\kappa_{1}}{h_{1}^*(x)}^{m-\kappa_{1}}. . . . . {h_{\delta(n)}(x)}^{\kappa_{\delta(n)}}{h_{\delta(n)}^*(x)}^{m-\kappa_{\delta(n)}}\\
\end{equation}
where $ m/2 \leq \kappa_{i}\leq m $ for each $1\leq j \leq \delta (n)$. Then
\begin{align*}
 F(x)& ={ f_{1}(x)}^{m/2}. . . . . {f_{\gamma(n)}(x)}^{m/2}{h_{1}(x)}^{m}{h_{1}^*(x)}^{m/2}. . . . . {h_{\delta(n)}(x)}^{m}{h_{\delta(n)}^*(x)}^{m/2}\\
 & = ({x^n}+1)^{m/2}{h_{1}(x)}^{m/2}{h_{2}(x)}^{m/2}. . . . . {h_{\delta(n)}(x)}^{m/2}\\
 & = p^{m/2}[h_{1}(x)h_{2}(x). . . . . {h}_{\delta(n)}(x)]^{m/2}\\
\end{align*}
must be in C . Write $ H(x) = [h_{1}(x)h_{2}(x). . . . . h_{\delta(n)}(x)]^{m/2}$. By the Euclidean algorithm for finite commutative rings ,there exist $s(x),r(x)\in  \mathbb{Z}_{p^m}[x] $ such that $ H(x) = (x^n-1)s(x)+r(x)$,where $deg(r(x))<n $ . Then , $\tilde H(x) = (x^n-1)\tilde s(x)+\tilde r(x)$ in $\mathbb{F}_{p}[x]$. Since $\tilde{h}_{i}(x)(1\leq i\leq \delta(n))$ are irreducible and not self-reciprocal,the number of nonzero coefficients of $\tilde r(x)$ must be odd ,otherwise ${x-1}$ divides $\tilde H(x)$ in $\mathbb{F}_{p}[x]$. It follows that  $\tilde H(x)$ has odd Hamming weight in $ \mathbb{F}_{p}[x]/ {\left\langle x^n-1\right\rangle} $. Computing in $ \mathbb{Z}_{p^m}[x]/ {\left\langle x^n-1\right\rangle }$ , we have 
\begin{equation*}
 F(x) = p^{m/2}H(x) = p^{m/2}[(x^n-1)s(x)+r(x)] = p^{m/2}r(x) = p^{m/2}[\tilde{r}(x)+pr_{1}(x)]\\
\end{equation*}
 for some $r_{1}(x)\in \mathbb{Z}_{p^m}[x]/ {\left\langle {x^n}-1\right\rangle }$ it can be seen that $\omega_{E}(F(x)) = {p^m}(1+pl)$, for some integer $l$. This shows that $\omega_{E}(F(x))$ is not multiple of $p^{m+1}$. This proves that C is of Type $I$ .
\end{proof}
\FloatBarrier
\begin{table}[H]
\centering
\begin{minipage}{0.4\textwidth}
\begin{tabular}{|l|l|l|l|l|l|}\hline
$n$ & $\gamma(n)$ & $\delta(n)$ & $N_{t}$ & $N_{n}$ & $N_{c}$\\\hline
$1$ & $1$ & $0$ & $2$ & $0$ & $1$ \\
$3$ & $2$ & $0$ & $4$ & $0$ & $1$ \\
$5$ & $2$ & $0$ & $4$ & $0$ & $1$ \\
$7$ & $1$ & $1$ & $8$ & $12$ & $4$ \\
$9$ & $3$ & $0$ & $8$ & $0$ & $1$ \\
$11$ & $2$ & $0$ & $4$ & $0$ & $1$ \\
$13$ & $2$ & $0$ & $4$ & $0$ & $1$ \\
$15$ & $3$ & $1$ & $32$ & $48$ & $4$ \\
$17$ & $3$ & $0$ & $8$ & $0$ & $1$ \\
$19$ & $2$ & $0$ & $4$ & $0$ & $1$ \\
$21$ & $2$ & $2$ & $64$ & $336$ & $16$ \\
$23$ & $1$ & $1$ & $8$ & $12$ & $4$ \\
$25$ & $3$ & $0$ & $8$ & $0$ & $1$ \\
$27$ & $2$ & $0$ & $16$ & $0$ & $1$ \\
$29$ & $2$ & $0$ & $4$ & $0$ & $1$ \\
$31$ & $1$ & $3$ & $128$ & $1872$ & $64$ \\
$33$ & $5$ & $0$ & $32$ & $0$ & $1$ \\
$35$ & $2$ & $2$ & $64$ & $336$ & $16$ \\
$37$ & $2$ & $0$ & $4$ & $0$ & $1$ \\
$39$ & $3$ & $1$ & $32$ & $48$ & $4$ \\
$41$ & $3$ & $0$ & $8$ & $0$ & $1$ \\
$43$ & $4$ & $0$ & $16$ & $0$ & $1$ \\
$45$ & $4$ & $2$ & $256$ & $1344$ & $16$ \\
$47$ & $1$ & $1$ & $8$ & $12$ & $4$ \\
$49$ & $1$ & $2$ & $32$ & $168$ & $16$ \\
\hline
\end{tabular}
\FloatBarrier
%\caption{This is a very very long caption which over writes the text on the right side of the paper.}
\label{tab:accuracy} 
\end{minipage}
\hfill
\begin{minipage}{0.55\textwidth}
\centering
\begin{tabular}{|l|l|l|l|l|l|}\hline
$n$ & $\gamma(n)$ & $\delta(n)$ & $N_{t}$ & $N_{n}$ & $N_{c}$ \\
\hline
$51$ & $4$ & $2$ & $256$ & $1344$ & $16$ \\
$53$ & $2$ & $0$ & $4$ & $0$ & $1$ \\
$55$ & $3$ & $1$ & $32$ & $48$ & $4$ \\
$57$ & $5$ & $0$ & $32$ & $0$ & $1$ \\
$59$ & $2$ & $0$ & $4$ & $0$ & $1$ \\
$61$ & $2$ & $0$ & $4$ & $0$ & $1$ \\
$63$ & $3$ & $5$ & $8192$ & $791808$ & $1024$\\ 
$65$ & $7$ & $0$ & $128$ & $0$ & $1$ \\
$67$ & $2$ & $0$ & $4$ & $0$ & $1$ \\
$69$ & $2$ & $2$ & $64$ & $336$ & $16$ \\
$71$ & $1$ & $1$ & $8$ & $12$ & $4$ \\
$73$ & $1$ & $4$ & $512$ & $19488$ & $256$\\ 
$75$ & $4$ & $2$ & $256$ & $1344$ & $16$ \\
$77$ & $2$ & $2$ & $64$ & $336$ & $16$\\ 
$79$ & $1$ & $1$ & $8$ & $12$ & $4$ \\
$81$ & $5$ & $0$ & $32$ & $0$ & $1$ \\
$83$ & $2$ & $0$ & $4$ & $0$ & $1$ \\
$85$ & $4$ & $4$ & $4096$ & $155904$ & $256$ \\
$87$ & $3$ & $1$ & $32$ & $48$ & $4$ \\
$89$ & $1$ & $4$ & $512$ & $19488$ & $256$\\ 
$91$ & $2$ & $4$ & $1024$ & $38976$ & $256$ \\
$93$ & $2$ & $6$ & $16384$ & $170240$ & $729$ \\
$95$ & $3$ & $1$ & $32$ & $48$ & $4$ \\
$97$ & $3$ & $0$ & $8$ & $0$ & $1$ \\
$99$ & $8$ & $0$ & $256$ & $0$ & $1$\\
\hline
\end{tabular}
%\caption{Speed up for the parallel solution of the trivial problem, 16
%Threads on Dual Xeon E-2690.} 
\label{tab:ompdiff} 
\end{minipage}
\end{table}
\par Table above lists the number of cyclic self-orthogonal codes and cyclic self-dual codes over $ \mathbb{Z}_{8} $ for odd length up to $99$ , where $N_{t}$,$N_{n}$
and $N_{c}$ denote the number of trivial cyclic self-orthogonal codes ,non-trivial cyclic self-orthogonal codes and cyclic self-dual codes,respectively.
 Theorem $5.3$ points out that cyclic self-dual codes over $ \mathbb{Z}_{p^m} $ of odd length are always of Type $I$.
\section{conclusion}
\par In this paper we have given cyclic self-orthogonal codes over $ \mathbb{Z}_{p^m} $ of odd length by calculating generator polynomial . Here we have given a  necessary and sufficient condition for the existence of nontrivial cyclic self-orthogonal codes over $ \mathbb{Z}_{p^m} $ . We have determined the enumerator of cyclic self-orthogonal codes over $ \mathbb{Z}_{p^m} $ for a fixed odd length . Cyclic self-dual codes over $ \mathbb{Z}_{p^m} $ of odd length are type $I$ have shown . An interesting problem is to study cyclic self-orthogonal codes over $ \mathbb{Z}_{p^m} $ of odd length when characteristic of ring divide the length of code .
\bibliographystyle{amsplain}

\end{document}